\def\G{\mathcal{G}}
\def\R{\mathbb{R}}
\def\SMALGO{\textsc{Smalgo}\xspace}
\def\SMO{\mbox{\textsc{Smalgo}}\xspace}
\def\SMI{\mbox{\textsc{Smalgo}-I}\xspace}
\def\SMII{\mbox{\textsc{Smalgo}-II}\xspace}
\def\PGRAPH{\mbox{$P$-graph}\xspace}
\def\TGRAPH{\mbox{$T$-graph}\xspace}
\def\DMASKS{\mbox{D-masks}\xspace}
\def\CS{Cross Sampling\xspace}
\def\BCS{Backward Cross Sampling\xspace}
\def\SHIFTAND{\mbox{Shift-And}\xspace}
\def\SHIFTOR{\mbox{Shift-Or}\xspace}
\def\PMASK{\mbox{P-mask}\xspace}
\def\PMASKS{\mbox{P-masks}\xspace}
\def\GTM{graph theoretic model\xspace}
\def\GSM{GSM\xspace}
\def\SM{Swap Matching\xspace}
\def\O{O}
\def\AND{\mathbin{\&}}
\def\OR{\mid}
\def\BMA{BMA\xspace}
\def\TH{\text{-th}}
\def\RD{\text{-rd}}
\def\S[#1]{S_{#1}}
\def\SP[#1]{\widehat S_{#1}}
\def\SS[#1,#2,#3]{{#1}_{[#2,#3]}}
\def\ALF{\Sigma}
\def\LAB{\sigma}
\def\DD[#1]{D_{#1}}
\def\D[#1,#2]{D^{#1}_{#2}}
\def\DR[#1,#2,#3]{\mathit{D#1}^{#2}_{#3}}
\def\R[#1,#2,#3]{R{#3}^{#1}_{#2}}
\def\RR[#1,#2,#3]{{R#1}^{#2}_{#3}}
\def\SHORT[#1,#2,#3]{{#1}^{#2}_{#3}}
\def\M[#1,#2]{m_{#1,#2}}
\def\G[#1,#2]{\mathcal{#1}_{#2}}
\def\S[#1]{S_{#1}}
\def\I[#1]{I^{#1}}
\def\MASK[#1,#2,#3]{\mathit{#1}^{#2}_{#3}}
\tikzset{%
    ->,>=stealth',node distance = 1cm,shorten >= 2pt%
    ,main/.style={fill = none, draw = none, font=\texttt\sffamily}
    ,red/.style={draw = red,font=\sffamily\bfseries}
    ,blue/.style={draw = blue,circle,font=\sffamily\bfseries}
    ,red/.style={draw = red}
}
\newcommand*{\ShortArrow}{7pt}
\newcommand{\POverWcc}{{\lceil\frac{p}{w}\rceil}}
\newtheorem{observation}[theorem]{Observation}
\title{A Simple Streaming Bit-parallel Algorithm\\ for Swap Pattern Matching\thanks{An extended abstract of this work appeared
in the Proceedings of the 7th International Conference on Mathematical Aspects of Computer and Information Sciences, MACIS 2017~\cite{BSV17}.}}
\author{V\'aclav Bla\v zej%
\thanks{Supported by the OP VVV MEYS funded project CZ.02.1.01/0.0/0.0/16\_019/0000765 ``Research Center for Informatics''
and by the SGS CTU project SGS17/209/OHK3/3T/18.}
\and Ond\v rej Such\'y%
\thanks{Supported by grant 17-20065S of the Czech Science Foundation.}
\and Tom\'a\v s Valla%
\thanks{Supported by the Centre of Excellence -- Inst.\ for Theor.\ Comp.\ Sci. 79 (project P202/12/G061 of the Czech Science Foundation.)}
}
\institute{Faculty of Information Technology, Czech Technical University in Prague,\\ Prague, Czech Republic}
\begin{document}

\maketitle


\begin{abstract}
The pattern matching problem with swaps is to find all occurrences of a pattern in a text while allowing the pattern to swap adjacent symbols.
The goal is to design fast matching algorithm that takes advantage of the bit parallelism of bitwise machine instructions
and has only streaming access to the input.
We introduce a new approach to solve this problem based on the graph theoretic model and compare its performance to previously known algorithms.
We also show that an approach using deterministic finite automata cannot achieve similarly efficient algorithms.
Furthermore, we describe a fatal flaw in some of the previously published algorithms based on the same model.
Finally, we provide experimental evaluation of our algorithm on real-world data.
 \end{abstract}

\section{Introduction}

In the \emph{Pattern Matching problem with Swaps} (\emph{\SM}, for short),
the goal is to find all occurrences of any \emph{swapped version} of a pattern~$P$ in a text~$T$, where~$P$ and~$T$ are strings of length~$p$ and~$t$ over an alphabet~$\ALF$, respectively.
By the swapped version of a pattern~$P$ we mean a string of symbols created from~$P$ by swapping adjacent symbols while ensuring that each symbol is swapped at most once (see Section~\ref{chap:notions} for formal definitions).
The solution of \SM is a~set of indices which represent where occurrences swapped version of~$P$ in~$T$ begin.
\SM is intensively studied due to its use in practical applications such as text and music retrieval, data mining, network security and biological computing~\cite{AIJR08ImpSwapMatch}.

The swap of two consecutive symbols is one of the most typical typing errors. It also represent a simpler version of swaps that appear in nature. In particular, the phenomenon of swaps occurs in gene mutations and duplications such as in the region of human chromosome 5 that is implicated in the disease called spinal muscular Atrophy, a common recessive form of muscular dystrophy~\cite{Lewin95}.
While the biological swaps occur at a gene level and have several additional constraints and characteristics, which make the problem  much more difficult, they do serve as a convincing pointer to the theoretical study of swaps as a natural edit operation for the approximation metric~\cite{Amir2000247}. Indeed Lowrance and Wagner~\cite{WagnerL1975} suggested to add the swap operation when considering the edit distance of two strings.

\SM was introduced in 1995 as an open problem in non-standard string matching~\cite{origin}.
The first result was reported by Amir et al.~\cite{Amir2000247} in 1997, who provided an $\O(t p^{\frac{1}{3}}\log p)$-time solution for alphabets of size~$2$, while also showing that alphabets of size exceeding~$2$ can be reduced to size~$2$ with a little overhead.
Amir et al.~\cite{Amir1998125} came up with solution with $\O(t \log^2 p)$ time complexity for some very restrictive cases.
Several years later Amir et al.~\cite{Amir200357} showed that \SM
can be solved by an algorithm for the overlap matching achieving the running
time of $\O(t\log p \log |\ALF|)$. This algorithm as well as all the previous
ones is based on fast Fourier transformation (FFT).

In 2008 Iliopoulos and Rahman~\cite{newModel} introduced a new graph theoretic approach to model the \SM problem and came up with the first efficient solution to \SM without using FFT (we show it to be incorrect).
Their algorithm based on bit parallelism runs in $\O((t+p)\log p)$ time if the pattern length is similar to the word-size of the target machine.
One year later Cantone and Faro~\cite{CS} presented a dynamic programming algorithm named \CS  solving \SM in~$\O(t)$ time and~$\O(|\ALF|)$ space, assuming that the pattern length is similar to the word-size in the target machine.
In the same year Campanelli et al.~\cite{BCS} enhanced the \CS algorithm using notions from Backward directed acyclic word graph matching algorithm and named the new algorithm \BCS.
This algorithm also assumes short pattern length.
Although \BCS has $\O(|\ALF|)$ space and $\O(tp)$ time complexity,
which is worse than that of \CS, it improves the real-world performance.

In 2013 Faro~\cite{automata} presented a new model to solve \SM using reactive automata
and also presented a new algorithm with $\O(t)$ time complexity assuming short patterns.
The same year Chedid~\cite{Chedid} reformulated the dynamic programming solution by Cantone and Faro~\cite{CS}
which results in more intuitive algorithms.
In 2014 a minor improvement by Fredriksson and Giaquinta~\cite{Fredriksson2014} appeared,
yielding slightly (at most factor $|\ALF|$) better asymptotic time complexity (and also slightly worse space complexity) for special cases of patterns.
The same year Ahmed et al.~\cite{revisited} took ideas of the algorithm by Iliopoulos and Rahman~\cite{newModel} and devised two algorithms named \SMI and \SMII which both run in $\O(t)$ for short patterns, but bear the same error as the original algorithm.

Another remarkable effort related to \SM is to actually count the number of swaps needed to match the pattern at the location~\cite{Amir2002}. This is more often studied
with an extra operation of character change allowed~\cite{Amir,Dombb2010,Lipsky2010}.

\subsubsection*{Our Contribution.}
We design a simple algorithm which solves the \SM problem.
The goal is to design a streaming algorithm, which is given one symbol per each execution step until
the end-of-input arrives, and thus does not need access to the whole input.
This algorithm has $O(\POverWcc(|\ALF|+t) + p)$ time and $O(\POverWcc|\ALF|)$ space complexity where~$w$ is the word-size of the machine.
We would like to stress that our solution, as based on the graph theoretic approach, does not use FFT.
Therefore, it yields a much simpler non-recursive algorithm allowing bit parallelism and is not suffering from the disadvantages of the convolution-based methods.
While our algorithm matches the best asymptotic complexity bounds of the previous results~\cite{CS,Fredriksson2014} (up to a $|\ALF|$ factor), we believe that its strength lies in the applications where the alphabet is small and the pattern length is at most the word-size, as it can be implemented using only $7+|\ALF|$ CPU registers and few machine instructions.
This makes it practical for tasks like DNA sequences scanning.
Also, as far as we know, our algorithm is currently the only known streaming algorithm for the swap matching problem.

We continue by proving that any deterministic finite automaton that solves \SM has number of states exponential in the length of the pattern. 


We also describe the \SMO (swap matching algorithm) by Iliopoulos and Rahman \cite{newModel} in detail.
Unfortunately, we have discovered that \SMO and derived algorithms contain a flaw which cause false positives to appear.
We have prepared implementations of \SMI, Cross Sampling, Backward Cross Sampling and our own algorithm,
measured the running times and the rate of false positives for the \SMI algorithm.
All of the sources are available for download.\footnote{\url{http://users.fit.cvut.cz/blazeva1/gsm.html}}

This paper is organized as follows.
First we introduce all the basic definitions, and also recall the graph theoretic model introduced in~\cite{newModel} and its use for matching in Section~\ref{chap:notions}.
In Section~\ref{chap:our} we show our algorithm for \SM problem and follow it in Section~\ref{sec:limits} with the proof that \SM cannot be solved efficiently by deterministic finite automata.
Then we describe the \SMO algorithms in detail in Section~\ref{chap:smalgo} and finish with the experimental evaluation of the algorithms in Section~\ref{chap:experiments}.


\section{Basic Definitions and the Graph Theoretic Model}
In this section we state the basic definitions, present the graph theoretic model and show a basic algorithm that solves \SM using the model.
\label{chap:notions}
\subsection{Notations and Basic Definitions}

We use the word-RAM as our computational model. That means we have access to memory cells of fixed capacity~$w$ (e.g., 64 bits).
A standard set of arithmetic and bitwise instructions include And ($\AND$), Or ($\OR$), Left bitwise-shift ($\mbox{LShift}$ or $\ll 1$) and Right bitwise-shift ($\mbox{RShift}$ or $\gg 1$).
Each of the standard operations on words takes single unit of time.
In order to compare to other existing algorithms, which are not streaming, we
define the access to the input in a less restrictive way -- the input is read
from a read-only part of memory and the output is written to a write-only part
of memory.
However, it will be easy to observe that our algorithm accesses the input sequentially.
We do not include the input and the output into the space complexity analysis.

A \emph{string}~$S$ over an alphabet~$\ALF$ is a finite sequence of symbols from~$\ALF$ and~$|S|$ is its  length.
By~$\S[i]$ we mean the~$i$-th symbol of~$S$ and we define a \emph{substring} $\SS[S,i,j]=\S[i] \S[i+1] \dots \S[j]$ for $1 \le i \le j \le |S|$, and \emph{prefix} $\SS[S,1,i]$ for $1 \le i \le |S|$.
String~$P$ \emph{prefix matches} string~$T$~$k$~symbols on position~$i$ if $P_{[1,k]}=T_{[i,i+k-1]}$.

Next we formally introduce a swapped version of a string.

\begin{definition}[Campanelli et al.~\cite{BCS}]\label{def:SMP}
   A \emph{swap permutation} for~$S$ is a permutation $\pi : \{ 1, \dots, n \} \to \{ 1, \dots, n \}$, where $n= |S|$, such that:
  \begin{enumerate}[(i)]
    \item if $\pi(i) = j$ then $\pi(j) = i$ (symbols at positions~$i$ and~$j$ are swapped),
    \item for all $i,\pi(i)\in\{i-1,i,i+1\}$ (only adjacent symbols are swapped),
    \item if $\pi(i) \ne i$ then $S_{\pi(i)} \ne S_i$ (identical symbols are not swapped).
  \end{enumerate}
For a string~$S$ a \emph{swapped version}~$\pi(S)$ is a string $\pi(S) = S_{\pi(1)} S_{\pi(2)} \dots S_{\pi(n)}$ where~$\pi$ is a swap permutation for~$S$.
\end{definition}


Now we formalize the version of matching we are interested in.

\begin{definition}
  Given a text $T = T_1 T_2 \dots T_t$ and a pattern $P = P_1 P_2 \dots P_p$, the pattern~$P$ is said to \emph{swap match}~$T$ at location~$i$ if there exists a swapped version~$\pi(P)$ of~$P$ that matches~$T$ at location~$i$, i.e., $\pi(P) = \SS[T,i,i+p-1]$.
\end{definition}

\subsection{A Graph Theoretic Model}
\label{sec:model}
The algorithms in this paper are based on a model introduced by Iliopoulos and Rahman~\cite{newModel}.
In this section we briefly describe this model.

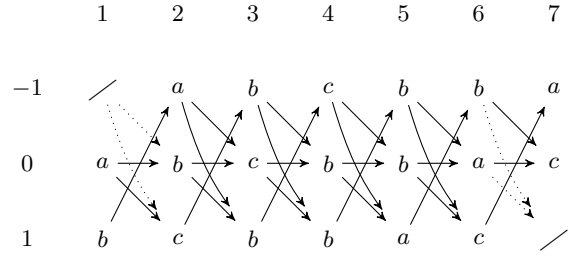
\begin{figure}[t]
  \centering
  \begin{tikzpicture}
    \node[main] (00) {};
    \node[main] (01) [below of = 00] {$-1$};
    \node[main] (02) [below of = 01] {$0$};
    \node[main] (03) [below of = 02] {$1$};
    \node[main] (10) [right of = 00] {$1$};
    \node[main] (20) [right of = 10] {$2$};
    \node[main] (30) [right of = 20] {$3$};
    \node[main] (40) [right of = 30] {$4$};
    \node[main] (50) [right of = 40] {$5$};
    \node[main] (60) [right of = 50] {$6$};
    \node[main] (70) [right of = 60] {$7$};

    \node[main] (12) [right of = 02] {$a$};
    \node[main] (22) [right of = 12] {$b$};
    \node[main] (32) [right of = 22] {$c$};
    \node[main] (42) [right of = 32] {$b$};
    \node[main] (52) [right of = 42] {$b$};
    \node[main] (62) [right of = 52] {$a$};
    \node[main] (72) [right of = 62] {$c$};

    \node[main] (11) [above of = 12] {$\cancel{~ ~}$};
    \node[main] (21) [above of = 22] {$a$};
    \node[main] (31) [above of = 32] {$b$};
    \node[main] (41) [above of = 42] {$c$};
    \node[main] (51) [above of = 52] {$b$};
    \node[main] (61) [above of = 62] {$b$};
    \node[main] (71) [above of = 72] {$a$};

    \node[main] (13) [below of = 12] {$b$};
    \node[main] (23) [below of = 22] {$c$};
    \node[main] (33) [below of = 32] {$b$};
    \node[main] (43) [below of = 42] {$b$};
    \node[main] (53) [below of = 52] {$a$};
    \node[main] (63) [below of = 62] {$c$};
    \node[main] (73) [below of = 72] {$\cancel{~ ~}$};

    \path[every node/.style={font=\sffamily\small}]
      (12) edge (22)
      (22) edge (32)
      (32) edge (42)
      (42) edge (52)
      (52) edge (62)
      (62) edge (72)

      (12) edge (23)
      (22) edge (33)
      (32) edge (43)
      (42) edge (53)
      (52) edge (63)
      (62) edge [dotted] (73)

      (13) edge (21)
      (23) edge (31)
      (33) edge (41)
      (43) edge (51)
      (53) edge (61)
      (63) edge (71)

      (11) edge [dotted] (22)
      (21) edge (32)
      (31) edge (42)
      (41) edge (52)
      (51) edge (62)
      (61) edge (72)

      (11) edge [shorten >= \ShortArrow , bend right = 10, dotted] (23)
      (21) edge [shorten >= \ShortArrow , bend right = 10] (33)
      (31) edge [shorten >= \ShortArrow , bend right = 10] (43)
      (41) edge [shorten >= \ShortArrow , bend right = 10] (53)
      (51) edge [shorten >= \ShortArrow , bend right = 10] (63)
      (61) edge [shorten >= \ShortArrow , bend right = 10, dotted] (73)
    ;
  \end{tikzpicture}
  \caption{\PGRAPH $\G[P,P]$ for the pattern $P = abcbbac$}
  \label{fig:example}
\end{figure}


%
For a pattern~$P$ of length~$p$ we construct a labeled graph $\G[P,P]=(V, E, \LAB)$  with vertices~$V$, edges~$E$, and a vertex labeling function $\LAB : V \to \ALF$ (see Fig.~\ref{fig:example} for an example).
Let $V=V'\setminus \{ \M[-1,1] , \M[1,p] \} $ where
$V' = \{\M[r,c]\mid r \in \{-1,0,1\}, c \in \{1,2,\dots,p\}\}$. 
For $\M[r,c] \in V$ we set $\LAB(\M[r,c])=P_{r+c}$. 
Each vertex~$\M[r,c]$ is identified with an element of a $3 \times p$ grid.
We set $E' := E'_1 \cup E'_2 \cup \dots \cup E'_{p-1}$, where
$E'_j := \{(m_{k,j},m_{i,j+1}) \mid k \in \{-1,0\},i \in \{0, 1\}\}\cup\{(m_{1,j},m_{-1,j+1})\}$,
and let $E = E' \cap V \times V$.
We call $\G[P,P]$ the \emph{\PGRAPH}.
Note that $\G[P,P]$ is directed acyclic graph, $|V(\G[P,P])| = 3p-2$, and $|E(\G[P,P])| = 5(p-1)-4$.

The idea behind the construction of $\G[P,P]$ is as follows.
We create vertices~$V'$ and edges~$E'$ which represent every swap pattern without unnecessary restrictions (equal symbols can be swapped).
We remove vertices~$\M[-1,1]$ and~$\M[1,p]$ which represent symbols from invalid indices~$0$ and~$p+1$.

The \PGRAPH now represents all possible swap permutations of the pattern~$P$ in the following sense.
Vertices~$\M[0,j]$ represent ends of prefixes of swapped version of the pattern which end by a non-swapped symbol.
Possible swap of symbols~$P_j$ and $P_{j+1}$ is represented by vertices $\M[1,j]$ and $\M[-1,j+1]$.
Edges represent symbols which can be consecutive.
Each path from column~$1$ to column~$p$ represents a~swap pattern and each swap pattern is represented this way.


\begin{definition}
  For a given~$\ALF$-labeled directed acyclic graph $G=(V,E,\LAB)$ vertices $s,e \in V$ and a directed path $f=v_1,v_2,\dots,v_k$ from $v_1=s$ to $v_k=e$, we call $S = \LAB(f) = \LAB(v_1) \LAB(v_2) \dots \LAB(v_k) \in \ALF^*$ a \emph{path string} of~$f$.
\end{definition}

\subsection{Using Graph Theoretic Model for Matching}

\label{sec:basicMatching}
In this section we describe an algorithm called \emph{Basic Matching Algorithm} (\BMA) which can determine whether there is a match of pattern~$P$ in text~$T$ on a position~$k$ using any graph model which satisfies the following conditions.
\begin{itemize}
  \item It is a directed acyclic graph,
  \item $V = V_1 \uplus V_2 \uplus \dots \uplus V_p$ (we can divide vertices to columns),
  \item $E \subseteq \{ (u, w) \mid u \in V_i , w \in V_{i+1} , 1 \le i < p \}$ (edges lead to next column).
\end{itemize}

Let $Q_0 = V_1$ be the \emph{starting vertices} and $F = V_p$ be the \emph{accepting vertices}.
\BMA is designed to run on any graph which satisfies these conditions.
Since \PGRAPH satisfies these assumptions we can use \BMA for~$\G[P,P]$.

\begin{algorithm}[t]
  \caption{The basic matching algorithm (\BMA)} \label{alg:BMA}
  \begin{algorithmic}[1]
  \Statex \textbf{Input}: Labeled directed acyclic graph $G=(V,E,\LAB)$, set $Q_0 \subseteq V$ of starting vertices, set $F \subseteq V$ of accepting vertices, text $T$, and position $k$.
  \State Let $\DD[1]' := Q_0$. \label{step:init}
  \For{$i = 1,2,3,\dots,p$}  \label{step:repeatstep}
    \State Let $\DD[i] := \{ x \mid x \in \DD[i]', \LAB(x) = T_{k+i-1} \}$.    \label{step:filter}
    \If {$\DD[i] = \emptyset$} finish. \EndIf    \label{step:end}
    \If {$\DD[i] \cap F \ne \emptyset$} we have found a match and finish. \EndIf    \label{step:match}
    \State Define the next iteration set $\DD[i+1]'$ as vertices which are successors of~$\DD[i]$, i.e.,
    \Statex \hspace{1cm}$\DD[i+1]' := \{ d \in V(\G[P,P]) \mid (v,d) \in E(\G[P,P]) \text{ for some } v \in \DD[i] \}$.     \label{step:propagate}
  \EndFor
 \end{algorithmic}
\end{algorithm}

The algorithm runs as follows (see also Algorithm~\ref{alg:BMA}).
We initialize the algorithm by setting $\DD[1]' := Q_0$ (Step~\ref{step:init}).
$\DD[1]'$ now holds information about vertices which are the end of some path~$f$ starting in~$Q_0$ for which~$\LAB(f)$ possibly prefix matches~$1$ symbol of $\SS[T,k,k+p-1]$.
To make sure that the path~$f$ represents a prefix match we need to check whether the label of the last vertex of the path~$f$ matches the symbol~$T_k$ (Step~\ref{step:filter}).
If no prefix match is left we did not find a match (Step~\ref{step:end}).
If some prefix match is left we need to check whether we already have a complete match (Step~\ref{step:match}).
If the algorithm did not stop it means that we have some prefix match but it is not a complete match yet.
Therefore we can try to extend this prefix match by one symbol (Step~\ref{step:propagate}) and check whether it is a valid prefix match (Step~\ref{step:filter}).
Since we extend the matched prefix in each step, we repeat these steps until the prefix match is as long as the pattern (Step~\ref{step:repeatstep}).

Having vertices in sets is not handy for computing so we present another way to describe this algorithm. We use their characteristic vectors instead.

\begin{definition}
  A Boolean labeling function $\I[] : V \to \{ 0,1 \}$ of vertices of~$\G[P,P]$ is called a \emph{prefix match signal}.
  \label{def:prefixmatchsignal}
\end{definition}

The algorithm can be easily divided into \emph{iterations} according to the value of~$i$ in Step~\ref{step:repeatstep}.
We denote the value of the prefix match signal in~$j\TH$ iteration as~$\I[j]$ and we define the following operations:
\begin{itemize}
  \item \emph{propagate signal along the edges}, is an operation which sets $\I[j](v) := 1$ if and only if there exists an edge $(u,v) \in E$ with $\I[j-1](u) = 1$,
  \item \emph{filter signal by a symbol $x \in \ALF$}, is an operation which sets $\I[](v) := 0$ for each~$v$ where $\LAB(v) \ne x$,
  \item \emph{match check}, is an operation which checks whether there exists $v \in F$ such that $I(v) = 1$ and if so reports a match.
\end{itemize}
With these definitions in hand we can describe \BMA in terms of prefix match signals as Algorithm~\ref{alg:pms}.
See Fig.~\ref{fig:examplegraph}
for an example of use of \BMA to figure out whether $P = acbab$ swap matches $T = babcabc$ at a position~$2$.

\begin{algorithm}[t]
 \caption{\BMA in terms of prefix match signals}\label{alg:pms}
  \begin{algorithmic}[1]
  \State Let $\I[0](v) := 1$ for each $v \in Q_0$ and $\I[0](v) := 0$ for each $v \notin Q_0$.
  \For {$i = 0,1,2,3,\dots,p-1$}
    \State Filter signals by a symbol $T_{k+i}$.
    \If{$\I[i](v) = 0$ for every $v \in \G[P,P]$} finish. \EndIf
    \If{$\I[i](v) = 1$ for any $v \in F$} we have found a match and finish. \EndIf
    \State Propagate signals along the edges.
  \EndFor
  \end{algorithmic}
\end{algorithm}

\begin{figure}[t]
  \centering
  \begin{tikzpicture}
    \node[main] (12)                 {$a^1$};
    \node[main] (22) [right of = 12] {$c$};
    \node[main] (32) [right of = 22] {$b$};
    \node[main] (42) [right of = 32] {$a^4$};
    \node[main] (52) [right of = 42] {$b^5$};

    \node[main] (21) [above of = 22] {$a$};
    \node[main] (31) [above of = 32] {$c^3$};
    \node[main] (41) [above of = 42] {$b$};
    \node[main] (51) [above of = 52] {$a$};

    \node[main] (13) [below of = 12] {$c$};
    \node[main] (23) [below of = 22] {$b^2$};
    \node[main] (33) [below of = 32] {$a$};
    \node[main] (43) [below of = 42] {$b$};

    \path[every node/.style={font=\sffamily\small}]
    (12) edge (22)
    (22) edge (32)
    (32) edge (42)
    (42) edge [dashed] (52)

    (12) edge [dashed] (23)
    (22) edge (33)
    (32) edge (43)

    (13) edge (21)
    (23) edge [dashed] (31)
    (33) edge (41)
    (43) edge (51)

    (21) edge (32)
    (31) edge [dashed] (42)
    (41) edge (52)

    (21) edge [shorten >= \ShortArrow , bend right = 10] (33)
    (31) edge [shorten >= \ShortArrow , bend right = 10] (43)
    ;
  \end{tikzpicture}
  \caption{\BMA of $\SS[T,2,6] = abcab$ on a \PGRAPH of the pattern $P = acbab$. The prefix match signal propagates along the dashed edges. Index~$j$ above a vertex~$v$ represent that $\I[j](v) = 1$, otherwise $\I[j](v) = 0$.}
  \label{fig:examplegraph}
\end{figure}
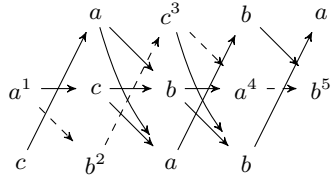


\subsection{Shift-And Algorithm}
\label{sec:shift-and}
The following description is based on~\cite[Chapter 5]{handbook} describing the \SHIFTOR algorithm.

For a pattern~$P$ and a text~$T$ of length~$p$ and~$t$, respectively, let~$\R[,,]$ be a bit array of size~$p$ and~$\R[j,,]$ its value after text symbol~$T_j$ has been processed.
It contains information about all matches of prefixes of~$P$ that end at the position~$j$ in the text.
For $1 \le i \le p$, $\R[j,i,] = 1$ if $\SS[P,1,i] = \SS[T,j-i+1,j]$
and 0 otherwise.
The vector $\R[j+1,,]$ can be computed from~$\R[j,,]$ as follows.
For each positive~$i$ we have $\R[j+1,i+1,] = 1$ if $\R[j,i,] = 1$
and $P_{i+1} = T_{j+1}$, and $\R[j+1,i+1,] = 0$ otherwise.
Furthermore, $\R[j+1,1,] = 1$ if $P_{1} = T_{j+1}$ and 0 otherwise.
If $\R[j+1,p,] = 1$ then a complete match can be reported.

The transition from~$\R[j,,]$ to~$\R[j+1,,]$ can be computed very fast as follows.
For each $x \in \ALF$ let~$\D[x,]$ be a bit array of size~$p$ such that for $1 \le i \le p, \D[x,i] = 1$ if and only if $P_i = x$.
The array~$\D[x,]$ denotes the positions of the symbol~$x$ in the pattern~$P$.
Each~$\D[x,]$ can be preprocessed before the search.
The computation of~$\R[j+1,,]$ is then reduced to three bitwise operations, namely $\R[j+1,,] = (\mbox{LShift}(\R[j,,]) \OR 1) \AND \D[T_{j+1},]$.
When $\R[j,p,] = 1$, the algorithm reports a match on a position $j - p + 1$.
\section{Our Algorithm}
\label{chap:our}
In this section we will show an algorithm which solves \SM.
We call the algorithm \GSM (Graph \SM).
\GSM uses the \GTM presented in Section~\ref{sec:model} and is based on the \SHIFTAND algorithm from Section~\ref{sec:shift-and}.

The basic idea of the \GSM algorithm is to represent prefix match signals (see Definition~\ref{def:prefixmatchsignal}) from the basic matching algorithm (Section~\ref{sec:basicMatching}) over  $\G[P,P]$ in bit vectors.
The GSM algorithm represents all signals~$I$ in the bitmaps~$\RR[X,,]$ formed by three vectors, one for each row.
Each time \GSM processes a symbol of~$T$, it first propagates the signal along the edges, then filters the signal and finally checks for matches.
All these operations can be done very quickly thanks to bitwise parallelism.




First, we make the concept of \GSM more familiar by presenting a way to interpret the \SHIFTAND algorithm by means of the basic matching algorithm (\BMA) from Section~\ref{sec:basicMatching} to solve the (ordinary) Pattern Matching problem.
Then we expand this idea to \SM by using the \GTM.

\subsection{Graph Theoretic View of the Shift-And Algorithm}
\label{sec:graphshiftand}
Let~$T$ and~$P$ be a text and a pattern of lengths~$t$ and~$p$, respectively.
We create the \TGRAPH $\G[T,P] = (V,E,\LAB)$ of the pattern~$P$.

\begin{definition}
  Let~$S$ be a string. The \emph{\TGRAPH} of~$S$ is a graph $\G[T,S] = (V,E,\LAB)$ where $V = \{ v_i \mid 1 \le i \le |S| \}$, $E = \{ (v_i,v_{i+1}) \mid 1 \le i \le |S-1| \}$ and $\LAB : V \to \ALF$ such that $\LAB(v_i) = S_i$.
  \label{def:tgraph}
\end{definition}
Note that the \TGRAPH is directed acyclic graph which can be divided into columns $V_i, 1 \le i \le p$ (each of them containing one vertex~$v_i$) such that the edges lead from~$V_j$ to~$V_{j+1}$.
This means that the \TGRAPH satisfies all assumptions of \BMA.
We apply \BMA to~$\G[T,P]$ to figure out whether~$P$ matches~$T$ at a position~$j$.
We get a correct result because for each $i \in \{1,\ldots,p\}$ we check whether $T_{j+i-1} = \LAB(v_i) = P_i$.

To find every occurrence of~$P$ in~$T$ we would have to run \BMA for each position separately.
This is basically the naive approach to solve the pattern matching.
We can improve the algorithm significantly when we parallelize the computations of~$p$ runs of \BMA in the following way.

The algorithm processes one symbol at a time starting from~$T_1$.
We say that the algorithm is in the~$j\TH$ step when a symbol~$T_j$ has been processed.
\BMA represents a prefix match as a prefix match signal $\I[] : V \to \{ 0, 1 \}$.
Its value in the~$j\TH$ step is denoted~$\I[j]$.
Since one run of the \BMA uses only one column of the \TGRAPH at any time we can use other vertices to represent different runs of the \BMA.
%
We represent all prefix match indicators in one vector so that we can manipulate them easily.
To do that we prepare a bit vector~$\R[,,]$.
Its value in~$j\TH$ step is denoted~$\R[j,,]$ and defined as $\R[j,i,] = \I[j](v_i)$.

First operation which is used in \BMA (propagate signal along the edges) can be done easily by setting the signal of~$v_i$ to value of the signal of its predecessor~$v_{i-1}$ in the previous step.
I.e., for $i \in \{ 1, \dots, p \}$ we set $\I[j](v_i) = 1$ if $i=1$ and $\I[j](v_i) = \I[j-1](v_{i-1})$ otherwise.
In terms of $R^j$ this means just $R^j=\mbox{LSO}(R^{j-1})$, where \emph{LSO} is defined as $\mbox{LSO}(x) = \mbox{LShift}(x) \OR 1$.

We also need a way to set $\I[](v_i) := 0$ for each~$v_i$ for which $\LAB(v_i) \ne T_{j+i}$ which is another basic \BMA operation (filter signal by a symbol).
We can do this using the bit vector~$\D[x,]$ from Section~\ref{sec:shift-and} and taking $\R[,,] \AND \D[x,]$.
I.e., the algorithm computes $R^j$ as $\R[j,,] = \mbox{LSO}(\R[j-1,,]) \AND \D[T_{j+1},]$.

The last \BMA operation we have to define is the \emph{match detection}.
We do this by checking whether $\R[j,p,] = 1$ and if this is the case then a match starting at position $j-p+1$ occurred.


\subsection{Our Algorithm for Swap Matching Using the Graph Theoretic Model}
\label{sec:gsm}

Now we are ready to describe the GSM algorithm.

We again let $\G[P,P] = (V,E,\LAB)$ be the \PGRAPH of the pattern~$P$,
apply \BMA to~$\G[P,P]$
to figure out whether~$P$ matches~$T$ at a position~$j$, and
%
parallelize $p$ runs of \BMA on $\G[P,P]$.

Again, the algorithm processes one symbol at a time 
and it
is in the~$j\TH$ step when symbol~$T_j$ is being processed.
We again denote the value of the prefix match signal~$\I[]: V \to \{ 0, 1 \}$ of \BMA in the~$j\TH$ step by~$\I[j]$.
I.e., the semantic meaning of~$\I[j](m_{r,c})$ is that $\I[j](m_{r,c}) = 1$ if there exists a swap permutation~$\pi$ such that $\pi(c)=c+r$ and $\SS[\pi(P),1,c] = \SS[T,j-c+1,j]$. Otherwise $\I[j](m_{r,c})$ is $0$.

We want to represent all prefix match indicators in vectors so that we can manipulate them easily.
We can do this by mapping the values of $I$ for rows $r \in \{-1,0,1\}$ of the \PGRAPH to vectors $\RR[U,,],\RR[M,,]$, and~$\RR[D,,]$, respectively.
We denote value of the vector $\RR[X,,] \in \{ \RR[U,,],\RR[M,,],\RR[D,,] \}$ in~$j\TH$ step as~$\RR[X,j,]$.
We define values of the vectors as $\RR[U,j,i] = \I[j](\M[-1,i])$, $\RR[M,j,i] = \I[j](\M[0,i])$, and $\RR[D,j,i] = \I[j](\M[1,i])$, where the value of $\I[j](v)=0$ for every $v\notin V$.

We define \emph{\BMA propagate signal along the edges} operation as setting the signal of~$\M[r,c]$ to~$1$ if at least one of its predecessors have signal set to~$1$. I.e., we set
$\I[j+1](\M[-1,i]) := \I[j](\M[1,i-1])$, $\I[j+1](\M[0,i]) := \I[j](\M[-1,i-1]) \OR \I[j](\M[0,i-1])$, $\I[j+1](\M[0,1]) := 1$, $\I[j+1](\M[1,i]) := \I[j](\M[-1,i-1]) \OR \I[j](\M[0,i-1])$, and $\I[j+1](\M[1,1]) := 1$.
%
We can perform the above operation using the $\mbox{LSO}(\R[,,])$ operation.
We obtain the \emph{propagate signal along the edges} operation in the form
$\RR[U',j+1,] := \mbox{LSO}(\RR[D,j,])$, $\RR[M',j+1,] := \mbox{LSO}(\RR[M,j,] \OR \RR[U,j,])$, and
 $\RR[D',j+1,] := \mbox{LSO}(\RR[M,j,] \OR \RR[U,j,])$. 

The operation \emph{filter signal by a symbol} can be done by first constructing a bit vector~$\D[x,]$ for each $x \in \ALF$ as $\D[x,i]=1$ if $x=P_i$ and $\D[x,i]=0$ otherwise.
Then we use these vectors to filter signal by a symbol~$x$ by taking
$\RR[U,j,] := \RR[U',j,] \AND \mbox{LShift}(D^{T_j})$, $\RR[M,j,] := \RR[M',j,] \AND D^{T_j}$, and $\RR[D,j,] := \RR[D',j,] \AND \mbox{RShift}(D^{T_j})$.

The last operation we define is the match detection.
We do this by checking whether $\RR[U,j,p]=1$ or $\RR[M,j,p]=1$ and if this is the case, then a match starting at a position $j-p+1$ occurred.

%
%

\begin{algorithm}[t]
  \caption{The graph swap matching (GSM)} \label{alg:GSM}
  \begin{algorithmic}[1]
  \Statex \textbf{Input}: Pattern $P$ of length $p$ and text $T$ of length $t$ over alphabet $\ALF$.
  \Statex \textbf{Output}: Positions of all swap matches.
  \State Let $\RR[U,0,] := \RR[M,0,] := \RR[D,0,] := 0^p$. \label{step:gsm_initrs}
  \State Let $\D[x,] := 0^p$, for all $x \in \ALF$. \label{step:gsm_initd}
  \For{$i = 1,2,3,\dots,p$} \label{step:gsm_initd1}
    \State $\D[P_i, i] := 1$ \label{step:gsm_initd2}
  \EndFor
  \For{$j = 1,2,3,\dots,t$}  \label{step:gsm_repeatstep}
    \State $\RR[U',j,] := \mbox{LSO}(\RR[D,j-1,])$.\label{step:gsm_propagate1}
    \State $\RR[M',j,] := \mbox{LSO}(\RR[M,j-1,] \mid \RR[U,j-1,])$.\label{step:gsm_propagate1apul}
    \State $\RR[D',j,] := \mbox{LSO}(\RR[M,j-1,] \mid \RR[U,j-1,])$.\label{step:gsm_propagate2}%
    \State $\RR[U,j,] := \RR[U',j,] \AND \mbox{LShift}(D^{T_j})$.\label{step:gsm_filter1}%
    \State $\RR[M,j,] := \RR[M',j,] \AND D^{T_j}$.
    \State $\RR[D,j,] := \RR[D',j,] \AND \mbox{RShift}(D^{T_j})$.\label{step:gsm_filter2}%
    \If {$\RR[U,j,p] = 1$ or $\RR[M,j,p] = 1$} \label{step:gsm_check}
      \State report a match on position $j-p+1$. \label{step:gsm_end}
    \EndIf    
  \EndFor




\end{algorithmic}
\end{algorithm}

The final \GSM algorithm (Algorithm~\ref{alg:GSM}) first prepares the \DMASKS~$\D[x,]$ for every $x \in \ALF$ and initializes $\RR[U,0,]:=\RR[M,0,]:=\RR[D,0,]:=0$ (Steps~\ref{step:gsm_initrs}--\ref{step:gsm_initd2}).
Then the algorithm computes the value of vectors~$\RR[U,j,]$, $\RR[M,j,]$, and~$\RR[D,j,]$ for $j \in \{ 1,\dots,t \}$ by first using the above formula for signal propagation (Steps~\ref{step:gsm_propagate1}--\ref{step:gsm_propagate2}) and then the formula for signal filtering (Steps~\ref{step:gsm_filter1}--\ref{step:gsm_filter2}) and checks whether $\RR[U,j,p]=1$ or $\RR[M,j,p]=1$ and if this is the case the algorithm reports a match (Steps~\ref{step:gsm_check} and~\ref{step:gsm_end}).


Observe that Algorithm~\ref{alg:GSM} accesses the input sequentially and thus it is a streaming algorithm.
We now prove correctness of our algorithm.
To ease the notation let us define $R^j(m_{r,c})$ to be $\RR[U,j,c]$ if $r=-1$, $\RR[M,j,c]$ if $r=0$, and $\RR[D,j,c]$ if $r=1$. We define $R'^j(m_{r,c})$ analogously.  Similarly, we define $D^x(m_{r,c})$ as $(\mbox{LShift}(D^x))_c=D^x_{c-1}$ if $r=-1$, $D^x_c$ if $r=0$, and $(\mbox{RShift}(D^x))_c=D^x_{c+1}$ if $r=1$. By the way the masks $D^x$ are computed on lines~\ref{step:gsm_initd}--\ref{step:gsm_initd2} of Algorithm~\ref{alg:GSM}, we get the following observation.

\begin{observation}\label{obs:filter}
 For every $m_{r,i} \in V$ and every $j \in \{i, \ldots t\}$ we have $D^{T_j}(m_{r,i})=1$ if and only if $T_j=P_{r+i}$.
\end{observation}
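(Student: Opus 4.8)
The statement is an immediate consequence of the way the \DMASKS are built, so the plan is essentially a short case analysis. First I would record that, after Steps~\ref{step:gsm_initd}--\ref{step:gsm_initd2} of Algorithm~\ref{alg:GSM}, the vector $\D[x,]$ satisfies $\D[x,i]=1$ if and only if $P_i=x$, for every $x\in\ALF$ and every $i\in\{1,\dots,p\}$; indeed the initialization first clears all the masks and then sets exactly the bit $\D[P_i,i]$ for each $i$, and the masks are never touched again. In particular this holds for $x=T_j$ (the hypothesis on $j$ only serves to make $T_j$ a legitimate symbol of $T$).

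Next I would fix $\M[r,i]\in V$ and distinguish the three cases $r\in\{-1,0,1\}$, in each of them merely unfolding the definition of $D^{T_j}(\M[r,i])$ and substituting the characterization of $\D[T_j,]$ above. For $r=0$ this gives $D^{T_j}(\M[0,i])=\D[T_j,i]$, which is $1$ iff $T_j=P_i=P_{r+i}$; for $r=-1$ it gives $D^{T_j}(\M[-1,i])=(\mathit{LShift}(D^{T_j}))_i=\D[T_j,i-1]$, which is $1$ iff $T_j=P_{i-1}=P_{r+i}$; and for $r=1$ it gives $D^{T_j}(\M[1,i])=(\mathit{RShift}(D^{T_j}))_i=\D[T_j,i+1]$, which is $1$ iff $T_j=P_{i+1}=P_{r+i}$.

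The only point that needs a line of justification is that the indices $i-1$, $i$, $i+1$, and $r+i$ occurring above are genuine positions of $P$, i.e., lie in $\{1,\dots,p\}$. This is exactly what is ensured by deleting $\M[-1,1]$ and $\M[1,p]$ when forming $V$ from $V'$: for every $\M[r,i]\in V$ we have $r+i\in\{1,\dots,p\}$ (so $P_{r+i}$ is defined), and the two configurations in which a shift would reach outside the mask, namely $r=-1$ with $i=1$ and $r=1$ with $i=p$, do not occur. Combining the three cases then yields $D^{T_j}(\M[r,i])=1 \iff T_j=P_{r+i}$, as claimed. Since everything reduces to substituting definitions, I expect no real difficulty beyond keeping this boundary bookkeeping straight.
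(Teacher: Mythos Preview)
Your proposal is correct and follows exactly the approach the paper intends: the observation is stated without proof as an immediate consequence of how the masks $D^x$ are built on lines~\ref{step:gsm_initd}--\ref{step:gsm_initd2}, and your case analysis on $r\in\{-1,0,1\}$ together with the boundary check coming from $\M[-1,1],\M[1,p]\notin V$ is precisely the unfolding of that remark.
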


The following lemma constitutes the crucial part of the correctness proof.

\begin{lemma}
\label{lem:match_algo}
For every $m_{r,i} \in V$ and every $j \in \{i, \ldots t\}$ we have $R^j(m_{r,i})=1$ if and only if there exists a swap permutation $\pi$ such that $\SS[\pi(P),1,i] = \SS[T,j-i+1,j]$ and $\pi(i)=i+r$.
\end{lemma}

\begin{proof}
Let us start with the ``if'' part.
We prove the claim by induction on $i$.
If $i=1$ and there is a swap permutation $\pi$ such that $\pi(1)=1+r$ and $P_{1+r}=T_{j}$, then the algorithm sets $R'^j(m_{r,1})$ to $1$ on line~\ref{step:gsm_propagate1}, \ref{step:gsm_propagate1apul}, or~\ref{step:gsm_propagate2} (recall the definition of $\mbox{LSO}$). As $P_{1+r}=T_{j}$, we have $D^{T_j}(m_{r,1})=1$ by Observation~\ref{obs:filter} and, therefore, by  lines~\ref{step:gsm_filter1}--\ref{step:gsm_filter2}, also $R^j(m_{r,1})$.

Now assume that $i>1$ and that the claim is true for every smaller $i$.
Assume that there exists a swap permutation $\pi$ such that $\SS[\pi(P),1,i] = \SS[T,j-i+1,j]$ and $\pi(i)=i+r$.
By induction hypothesis we have that $R^{j-1}(m_{r',i-1})=1$, where $r'=i-1-\pi(i-1)$. Since $r$ equals $-1$ if and only if $r'$ equals $+1$ by Definition~\ref{def:SMP}, we have $(r,r') \in \{(-1,1),(0,-1),(0,0),(1,-1),(1,0)\}$. Therefore the algorithm sets $R'^j(m_{r,i})$ to $1$ on line~\ref{step:gsm_propagate1}, \ref{step:gsm_propagate1apul}, or~\ref{step:gsm_propagate2}.
Moreover, since $P_{i+r}=T_{j}$, we have $D^{T_j}(m_{r,i})=1$ by Observation~\ref{obs:filter} and the algorithm sets $R^j(m_{r,i})$ to $1$ on one of the lines~\ref{step:gsm_filter1}--\ref{step:gsm_filter2}.

Now we prove the ``only if'' part again by induction on $i$.
If $i=1$ and $R^j(m_{r,i})=1$, then we must have $D^{T_j}(m_{r,1})=1$ and, by Observation~\ref{obs:filter}, also $P_{1+r}=T_{j}$. We obtain $\pi$ by setting $\pi(1)=1+r$, $\pi(2)=2-r$ and $\pi(i')=i'$ for every $i' \in \{2, \ldots, p\}$. It is easy to verify that this is a swap permutation for $P$ and has the desired properties.

Now assume that $i>1$ and that the claim is true for every smaller $i$.
Assume that $R^j(m_{r,i})=1$. Then, due to lines~\ref{step:gsm_filter1}--\ref{step:gsm_filter2} we must have $D^{T_j}(m_{r,i})=1$ and, hence, by Observation~\ref{obs:filter}, also $P_{i+r}=T_{j}$. Moreover, we must have $R'^j(m_{r,i})=1$ and, hence, by lines~\ref{step:gsm_propagate1}--\ref{step:gsm_propagate2} of the algorithm also $R^{j-1}(m_{r',i-1})=1$ for some $r'$ with $(r,r') \in \{(-1,1),(0,-1),(0,0),(1,-1),(1,0)\}$. By induction hypothesis there exists a swap permutation $\pi'$ for $P$ such that $\SS[\pi'(P),1,i-1] = \SS[T,j-i+1,j-1]$ and $\pi'(i-1)=i-1+r'$. If $\pi'(i)=i+r$, then setting $\pi=\pi'$ finishes the proof. Otherwise we have either $r=0$ or $r=1$ and $i < p$.
In the former case we let $\pi(i')=i'$ for every $i' \in \{i, \ldots, p\}$ and in the later case we let $\pi(i)=i+1$, $\pi(i+1)=i$ and $\pi(i')=i'$ for every $i' \in \{i+2, \ldots, p\}$. In both cases we let $\pi(i')=\pi'(i')$ for every $i' \in \{1, \ldots, i-1\}$. It is again easy to verify that $\pi$ is a swap permutation for $P$ with the desired properties.
\qed\end{proof}

\begin{theorem}
\label{thm:correct}
The \GSM algorithm is correct.
\end{theorem}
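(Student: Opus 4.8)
The plan is to derive the correctness of \GSM directly from the two preceding lemmas, which together establish that $R^j(m_{r,i})=1$ holds \emph{if and only if} there is a swap permutation $\pi$ for $P$ with $\SS[\pi(P),1,i]=\SS[T,j-i+1,j]$ and $\pi(i)=i+r$. The match-detection step of the algorithm (line~\ref{step:gsm_end}) reports a match at position $j-p+1$ precisely when $\RR[U,j,p]=1$ or $\RR[M,j,p]=1$, i.e.\ when $R^j(m_{-1,p})=1$ or $R^j(m_{0,p})=1$. So the whole argument reduces to showing that this disjunction is equivalent to ``$P$ swap matches $T$ at position $j-p+1$''.

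First I would unfold the definition of a swap match: $P$ swap matches $T$ at location $k=j-p+1$ iff there is a swap permutation $\pi$ for $P$ with $\pi(P)=\SS[T,k,k+p-1]=\SS[T,j-p+1,j]$, which is the same as $\SS[\pi(P),1,p]=\SS[T,j-p+1,j]$. The key combinatorial observation is that for \emph{any} swap permutation $\pi$ for a string of length $p$, the last index satisfies $\pi(p)\in\{p-1,p\}$, i.e.\ $\pi(p)=p+r$ for some $r\in\{-1,0\}$ — position $p$ cannot be swapped with a nonexistent position $p+1$. Hence ``there exists a swap permutation $\pi$ with $\SS[\pi(P),1,p]=\SS[T,j-p+1,j]$'' is equivalent to ``there exists $r\in\{-1,0\}$ and a swap permutation $\pi$ with $\SS[\pi(P),1,p]=\SS[T,j-p+1,j]$ and $\pi(p)=p+r$''.

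Now I would apply the lemmas with $i=p$ (valid since $j\in\{p,\dots,t\}$ whenever position $j-p+1$ is a legitimate starting position, and for $j<p$ no match at $j-p+1$ is possible anyway). By Lemma~\ref{lem:match_algo}, existence of such a $\pi$ with $\pi(p)=p+r$ implies $R^j(m_{r,p})=1$; by Lemma~\ref{lem:algo_match}, $R^j(m_{r,p})=1$ implies existence of such a $\pi$. Taking the disjunction over $r\in\{-1,0\}$ (the only relevant values, since $m_{1,p}\notin V$ so $R^j(m_{1,p})$ is identically $0$, matching the fact that $\pi(p)\ne p+1$), we conclude that $R^j(m_{-1,p})=1$ or $R^j(m_{0,p})=1$ if and only if $P$ swap matches $T$ at position $j-p+1$. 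Therefore the algorithm reports exactly the correct set of positions, and since the per-symbol computation in Steps~\ref{step:gsm_propagate1}--\ref{step:gsm_filter} faithfully realizes the propagate/filter operations on which the lemmas were proved, \GSM is correct.

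The only genuinely non-routine point is the boundary bookkeeping at column $p$: one must be sure that restricting to $r\in\{-1,0\}$ loses nothing, which rests on property~(ii) of Definition~\ref{def:SMP} together with the removal of vertex $m_{1,p}$ from $V$ (so the convention $\I[j](v)=0$ for $v\notin V$ makes $R^j(m_{1,p})=0$ consistent with there being no swap permutation sending $p$ to $p+1$). I would also remark briefly that the range condition $j\ge i$ in the lemmas is exactly what is needed here with $i=p$, so no separate argument for small $j$ is required beyond noting that a match at a nonpositive or too-early position is vacuously absent. Everything else is a direct quotation of Lemmas~\ref{lem:match_algo} and~\ref{lem:algo_match}.
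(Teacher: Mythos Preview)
Your proof is correct and follows essentially the same approach as the paper: both directions are obtained by applying Lemmas~\ref{lem:match_algo} and~\ref{lem:algo_match} with $i=p$ and noting that the match-detection step checks exactly $R^j(m_{-1,p})$ and $R^j(m_{0,p})$. You are actually more careful than the paper in spelling out why only $r\in\{-1,0\}$ is relevant at column~$p$ and in addressing the range condition $j\ge p$, points the paper leaves implicit.
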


\begin{proof}

Our \GSM algorithm reports a match on position $j-p+1$ if and only if $R^j(m_{p,-1})=1$ or $R^j(m_{p,0})=1$.
However, by Lemma~\ref{lem:match_algo}, this happens if and only if there is a swap match of $P$ on position $j-p+1$ in $T$. Hence, the algorithm is correct.
\end{proof}

\begin{theorem}
\label{theor:complexity}
The \GSM algorithm runs in $O(\POverWcc (|\ALF| + t) + p)$ time and uses
$O(\POverWcc |\ALF|)$ memory cells (not counting the input and output cells),
where~$t$ is the length of the input text,~$p$ length of the input pattern,~$w$
is the word-size of the machine, and~$|\ALF|$ size of the alphabet.\footnote{To simplify the analysis, we assume that $\log t < w$, i.e., the iteration counter fits into one memory cell.}
\end{theorem}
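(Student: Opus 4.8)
The plan is to analyze the algorithm line by line, grouping the work into a preprocessing phase (Steps~\ref{step:gsm_initrs}--\ref{step:gsm_initd2}) and a main loop (Steps~\ref{step:gsm_repeatstep}--\ref{step:gsm_end}), and to account for the fact that each bit vector of length~$p$ occupies $\POverWcc$ machine words, so every ``unit'' bitwise operation on such a vector actually costs $O(\POverWcc)$ time. First I would observe that all the vectors in play — the three signal vectors $\RR[U,,],\RR[M,,],\RR[D,,]$ (together with their primed copies) and the $|\ALF|$ masks $D^x$ — are length-$p$ bit arrays, hence each stored in $\POverWcc$ words, giving the stated $O(\POverWcc|\ALF|)$ space bound once we note that only a constant number of signal vectors need be kept at any time (the iteration only ever refers to step $j-1$ and step $j$).

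For the preprocessing, Step~\ref{step:gsm_initrs} zeroes a constant number of length-$p$ vectors in $O(\POverWcc)$ time; Step~\ref{step:gsm_initd} zeroes all $|\ALF|$ masks in $O(\POverWcc|\ALF|)$ time; and the loop of Step~\ref{step:gsm_initd2} performs $p$ single-bit writes, each of which costs $O(1)$ — here one uses that addressing bit~$i$ of a $\POverWcc$-word array takes constant time given the word-RAM model and the footnote assumption that indices fit in a word — for a total of $O(p)$. Summing, preprocessing is $O(\POverWcc|\ALF| + p)$.

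For the main loop, each of the $t$ iterations executes a fixed number of operations — the shifts $\mathit{LShift},\mathit{RShift}$ hidden inside $\mathit{LSO}$ and in Step~\ref{step:gsm_filter}, the bitwise $\AND$ and $\OR$ operations, the assignments, and the two single-bit tests $\RR[U,j,p],\RR[M,j,p]$ in Step~\ref{step:gsm_end} — every one of which acts on a length-$p$ bit vector and therefore costs $O(\POverWcc)$; the match report in Step~\ref{step:gsm_end} costs $O(1)$ per reported position and there are at most $t$ of them. Hence the loop runs in $O(\POverWcc \cdot t)$ time. Adding the preprocessing gives $O(\POverWcc(|\ALF| + t) + p)$ time overall, and the space is dominated by the $|\ALF|$ masks, i.e.\ $O(\POverWcc|\ALF|)$.

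I do not expect a genuine obstacle here; the only point requiring a little care is the bookkeeping that a multi-word bit vector operation is $\Theta(\POverWcc)$ rather than $O(1)$, and that shifting across word boundaries is still $O(\POverWcc)$ — this is standard but worth stating explicitly, together with the remark that the $+p$ term cannot be absorbed into $\POverWcc \cdot t$ when $p > wt$, which is why it appears separately.
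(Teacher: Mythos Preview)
Your proposal is correct and follows essentially the same line-by-line accounting as the paper's own proof: both split the work into preprocessing (masks in $O(\POverWcc|\ALF|)$, bit-setting in $O(p)$) and the main loop ($t$ iterations each costing $O(\POverWcc)$), and both observe that only a constant number of signal vectors plus the $|\ALF|$ masks need be stored. The only minor slip is that you fold the single-bit tests $\RR[U,j,p],\RR[M,j,p]$ into the $O(\POverWcc)$ cost, whereas they are genuinely $O(1)$ each (as the paper notes); this does not affect the bound.
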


\begin{proof}
The initialization of $\RR[X,,]$ and $\D[x,]$ masks (lines~\ref{step:gsm_initrs} and \ref{step:gsm_initd})
takes $O(\POverWcc |\ALF|)$ time.
The bits in $\D[x,]$ masks are set according to the pattern in $O(p)$ time (lines~\ref{step:gsm_initd1} and~\ref{step:gsm_initd2}).
The main cycle of the algorithm (lines~\ref{step:gsm_repeatstep}--\ref{step:gsm_end}) makes $t$ iterations.
Each iteration consists of computing values of $\RR[X,,]$ in $13$ bitwise operations, i.e., in $O(\POverWcc)$ machine operations, and checking for the result in $O(1)$ time.
This gives $O(\POverWcc (|\ALF| + t) + p)$ time in total.
The algorithm saves 3 $\RR[X,,]$ masks (using the same space for all $j$ and also for $\RR[X',,]$ masks), $|\ALF|$ $\D[x,]$ masks, and constant number of variables for other uses (iteration counters, temporary variable, etc.).
Thus, in total the \GSM algorithm needs $O(\POverWcc |\ALF|)$ memory cells.
\qed\end{proof}

\begin{corollary}
\label{cor:small_patt}
If $p = c w$ for some constant~$c$, then the GSM algorithm runs in $O(|\ALF| + p + t)$ time and has $O(|\ALF|)$ space complexity. Moreover, if $p \le w$, then the GSM algorithm can be implemented using only $7 + |\ALF|$ memory cells.
\end{corollary}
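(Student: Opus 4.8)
The corollary is a direct specialization of Theorem~\ref{theor:complexity}, so the whole proof is a matter of plugging in the hypotheses and then doing a small implementation bookkeeping argument for the last sentence. First I would recall that Theorem~\ref{theor:complexity} gives time $O(\POverWcc(|\ALF|+t)+p)$ and space $O(\POverWcc|\ALF|)$. Under the hypothesis $p = cw$ we have $\POverWcc = \lceil cw/w\rceil = \lceil c\rceil = O(1)$, so the $\POverWcc$ factor collapses to a constant. Substituting, the time bound becomes $O(|\ALF|+t+p)$ and the space bound becomes $O(|\ALF|)$, which is the first claim. (The slightly weaker hypothesis $p\le w$, used in the second sentence, also gives $\POverWcc=1$, so the same collapse applies a fortiori.)

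For the second sentence I would argue at the level of concrete registers rather than asymptotics. When $p\le w$, each of the bit vectors $\RR[U,,]$, $\RR[M,,]$, $\RR[D,,]$ fits into a single machine word, and likewise each \DMASK~$\D[x,]$ fits into one word. I would walk through Algorithm~\ref{alg:GSM}: the three vectors $\RR[U,,],\RR[M,,],\RR[D,,]$ occupy $3$ cells; since the primed vectors $\RR[X',,]$ are only needed transiently within one iteration of the main loop, they can reuse at most one scratch cell together with the unprimed ones (one observes that lines~\ref{step:gsm_propagate1}--\ref{step:gsm_filter} can be scheduled so that a single temporary suffices, because $\mathit{LSO}(\RR[M,j-1,]\mid\RR[U,j-1,])$ is computed once and reused for both $\RR[M',j,]$ and $\RR[D',j,]$); the $|\ALF|$ masks $\D[x,]$ occupy $|\ALF|$ cells; and the remaining needs — the loop counter $j$ (which fits in one cell by the footnote assumption $\log t < w$), a temporary for the shifted masks $\mathit{LShift}(D^{T_j})$ and $\mathit{RShift}(D^{T_j})$, and the index/position arithmetic for reporting a match — can be accommodated in the remaining few registers. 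Tallying: $3$ for the $\RR[X,,]$ vectors, $1$ scratch, $|\ALF|$ for the \DMASKS, and $3$ more for counter, shift-temporary, and match bookkeeping, giving the stated $7+|\ALF|$.

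The only mildly delicate point — the ``main obstacle,'' though it is minor — is the exact register count in the second sentence: one must verify that the transient vectors $\RR[U',,],\RR[M',,],\RR[D',,]$ and the shifted \DMASKS really can be computed in place without needing more than the claimed number of scratch cells. This is settled by noting that $\RR[U',j,]$ depends only on $\RR[D,j-1,]$, while $\RR[M',j,]$ and $\RR[D',j,]$ are the same value $\mathit{LSO}(\RR[M,j-1,]\mid\RR[U,j-1,])$; hence one computes this common value into a temporary, derives $\RR[U',j,]$ from the old $\RR[D,j-1,]$, and then the AND-with-shifted-mask steps of line~\ref{step:gsm_filter} overwrite the unprimed cells one at a time, the shift of $D^{T_j}$ being produced in the same temporary just before it is consumed. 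Everything else — substituting $\POverWcc=O(1)$ — is routine and needs no further comment.
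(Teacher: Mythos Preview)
Your proposal is correct and follows essentially the same approach as the paper: the first part is an immediate substitution of $\lceil p/w\rceil = O(1)$ into Theorem~\ref{theor:complexity}, and the second part is an explicit register count. The only difference is in how the seven non-mask cells are itemized---the paper lists three for the $\RR[X,,]$ vectors (reusing their space for the primed versions), one pointer to the text, one iteration counter, one constant for the match check, and one temporary, whereas you allocate two temporaries and omit the text pointer---but both tallies reach $7+|\ALF|$, and your scheduling argument for computing the primed vectors in place is in fact more detailed than what the paper supplies.
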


\begin{proof}
The first part follows directly from Theorem~\ref{theor:complexity}. Let us show the second part.
We need~$|\ALF|$ cells for all \DMASKS,~$3$ cells for~$\RR[,,]$ vectors (reusing the space also for $\RR[',,]$ vectors), one pointer to the text, one iteration counter, one constant for the match check and one temporary variable for the computation of the more complex parts of the algorithm. Alltogether, we need only $7 + |\ALF|$ memory cells to run the \GSM algorithm.
\qed\end{proof}

From the space complexity analysis we see that for some sufficiently small alphabets (e.g. DNA sequences) the GSM algorithm can be implemented in practice using solely CPU registers with the exception of text which has to be loaded from the RAM.

\section{Limitations of the Finite Deterministic Automata Approach}
\label{sec:limits}
Many of the string matching problems can be solved by finite automata.
The construction of a non-deterministic finite automaton that solves \SM can be done by a simple modification of the \PGRAPH.
An alternative approach to solve the \SM would thus be to determinize and execute this automaton.
The drawback is that the determinization process may lead to an exponential number of states.
We show that in some cases it actually does, contradicting the conjecture of Holub~\cite{holub}, stating that the number of states of this determinized automaton is $O(p)$.


\begin{theorem}
\label{thm:determ}
There is an infinite family~$F$ of patterns such that any deterministic finite automaton~$A_P$ accepting the language~$L_S(P)=\{u\pi(P) \mid u \in \ALF^*, \pi \text{ is a swap permutation for } P\}$ for $P \in F$ has $2^{\Omega(|P|)}$ states.
\end{theorem}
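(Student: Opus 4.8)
The plan is to exhibit a concrete family~$F$ of patterns together with a family of text suffixes that a deterministic automaton for $L_S(P)$ cannot distinguish without exponentially many states, and to invoke the Myhill--Nerode theorem. The guiding intuition is that a swap permutation acting on~$P$ makes only \emph{local} modifications (each position moves by at most one, and the choices at far-apart positions are independent), so if $P$ has $\Theta(|P|)$ ``independent'' places where a swap may or may not occur, then the set of swapped versions of~$P$ encodes $2^{\Theta(|P|)}$ distinct bit-strings of choices. The automaton, having read a prefix of the text, must essentially remember which of these choices are still compatible with what it has seen, and that is too much information to store in polynomially many states.

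First I would fix the family. A natural candidate is $P = (abab)^k$ over $\ALF=\{a,b\}$, or more robustly a pattern built from $k$ well-separated ``gadget'' blocks, say $P_k = (ab)^{?}$-type blocks padded with a ``separator'' symbol (e.g.\ $c$) so that swaps in one block cannot interact with swaps in another; using a third symbol for separators makes the independence argument cleanest, though one can also push it through on a binary alphabet with a little more care. In each block there is a position where swapping the adjacent pair yields a visibly different symbol (condition (iii) of Definition~\ref{def:SMP} is satisfied because the two symbols differ), and not swapping yields another. So a subset $J \subseteq \{1,\dots,k\}$ of blocks selects a distinct swapped version $\pi_J(P)$, giving $2^k$ pairwise distinct strings, all of length $|P| = \Theta(k)$.

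Next I would show these $2^k$ strings are pairwise Myhill--Nerode-inequivalent for the language $L_S(P)$. For this I split each $\pi_J(P)$ at a carefully chosen cut point into $x_J y_J$ and argue that for $J \neq J'$ there is a distinguishing suffix: take the ``complementary'' half of one of the two words, so that $x_J z \in L_S(P)$ (it reassembles into a legal swapped version of~$P$ read at the end of the text, with $u = \varepsilon$ or some fixed prefix) while $x_{J'} z \notin L_S(P)$, because the block where $J$ and $J'$ differ would then be forced into an illegal configuration — either a position is required to ``swap'' with a non-adjacent position, or two equal symbols would have to be swapped, or the reassembled word is simply not equal to any $\pi(P)$. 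The separators are what guarantee the cut is ``clean'': no swap straddles a separator, so the prefix $x_J$ determines the swap choices in all blocks lying entirely within it, and these must match whatever legal completion~$z$ expects. Hence the $2^k$ prefixes lie in distinct Myhill--Nerode classes, and any DFA for $L_S(P)$ has at least $2^k = 2^{\Omega(|P|)}$ states.

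The main obstacle I anticipate is the bookkeeping in the distinguishing-suffix step: one must choose the cut point \emph{inside} a block (not between blocks) so that the prefix commits to a partial, ambiguous state of that block, and then verify carefully — using all three clauses of Definition~\ref{def:SMP} and the boundary deletions of $m_{-1,1}$ and $m_{1,p}$ — that exactly one of the two continuations can be completed to an element of $L_S(P)$. A secondary point to get right is that $L_S(P)$ has the free prefix $u \in \ALF^*$, so I must make sure the distinguishing argument is about the \emph{suffix} of the text matching a swapped version of~$P$ and is not spoiled by the ability to ``restart'' the match; choosing the gadgets so that $\pi_J(P)$ cannot contain a second occurrence of a swapped version of~$P$ as a proper factor (again the separators help) handles this. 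Everything else — counting, the length bound $|P_k|=\Theta(k)$, and the final appeal to Myhill--Nerode — is routine.
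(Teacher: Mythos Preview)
Your approach is exactly the paper's: a Myhill--Nerode argument on $2^k$ witness strings arising from $k$ independent swap choices in separator-delimited blocks over a three-letter alphabet. Concretely the paper takes $P_k = ac(abc)^k$ over $\{a,b,c\}$, uses as witnesses the \emph{full} swapped versions $T_i$ obtained by replacing each $abc$ block independently by $abc$ or $bac$, and for $T_i \ne T_j$ first differing in block $m$ takes the distinguishing suffix $(abc)^{m+1}$; since membership in $L_S(P)$ is determined solely by the last $|P|$ symbols, your two anticipated obstacles (cutting inside a block, the free prefix $u$) never arise---one just checks that the length-$|P|$ suffix of exactly one of $T_i(abc)^{m+1}$, $T_j(abc)^{m+1}$ begins $ac\cdots$ rather than $bc\cdots$.
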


\begin{proof}
For any integer~$k$ we define the pattern $P_k := ac(abc)^k$.
Note that the length of~$P_k$ is $\Theta(k)$.
Suppose that the automaton~$A_P$ recognizing language~$L(P)$ has~$s$ states such that $s < 2^k$.
We consider a set of strings $T_0, \dots, T_{2^k-1}$ where~$T_i$ is defined as follows.
Let $b^i_{k-1}, b^i_{k-2} \dots b^i_0$ be the binary representation of the number~$i$.
Let $B^i_j=abc$ if $b^i_j = 0$ and let $B^i_j=bac$ if $b^i_j = 1$.
Then, let $T_i := ac B^i_{k-1} B^i_{k-2} \dots B^i_0$.
See Table~\ref{fig:automataTexts} for an example.
Note that each $T_i$, $i \in \{0, \ldots, 2^k-1\}$ is a swapped version of $P=T_0$. 
Since $s < 2^k$, there exist $0 \le i < j \le 2^k-1$ such that both~$T_i$ and~$T_j$ are accepted by the same accepting state~$q$ of the automaton~$A$.
Let~$m$ be the minimum number such that $b^i_{k-1-m} \ne b^j_{k-1-m}$.
Note that $b^i_m = 0$ and $b^j_m = 1$.
Now we define $T_i'=T_i (abc)^{(m+1)}$ and $T_j'=T_j (abc)^{(m+1)}$.
Let $X=(T_i')_{[3(m+1)+1,3(m+1+k)+2]}$ and $Y=(T_j')_{[3(m+1)+1,3(m+1+k)+2]}$ be the suffices of the strings $T_i'$ and $T_j'$ both of length $3k+2$.
Note that~$X$ begins with $bc\dots$ and~$Y$ begins with $ac\dots$ and that block~$abc$ or~$bac$ repeats for~$k$ times in both.
Therefore pattern~$P$ swap matches~$Y$ and does not swap match~$X$.
Since for the last symbol of both~$T_i$ and~$T_j$ the automaton is in the same state~$q$, the computation for~$T_i'$ and~$T_j'$ must end in the same state~$q'$.
However as~$X$ should not be accepted and~$Y$ should be accepted we obtain contradiction with the correctness of the automaton~$A$.
Hence, we may define the family~$F$ as $F=\{ P_1, P_2, \dots \}$, concluding the proof.
\qed\end{proof}
\begin{table}[t]
  \centering
  \caption{An example of the construction from proof of Theorem~\ref{thm:determ} for~$k=3$.}
  \begin{tabular}{r|c}
   $P=T_0$ & $acabcabcabc$ \\
      $T_1$ & $acabcabcbac$ \\
      $T_2$ & $acabcbacabc$ \\
      $T_3$ & $acabcbacbac$ \\
      $T_4$ & $acbacabcabc$ \\
      $T_5$ & $acbacabcbac$ \\
      $T_6$ & $acbacbacabc$ \\
      $T_7$ & $acbacbacbac$ \\
  \end{tabular}
  \label{fig:automataTexts}
\end{table}
This proof shows the necessity for specially designed algorithms which solve the \SM.
We presented one in the previous section and now we reiterate on the existing algorithms.

\section{Smalgo Algorithm}
\label{chap:smalgo}
In this section we discuss how \SMO by Iliopoulos and Rahman~\cite{newModel} and \SMI and \SMII by Ahmed et al.~\cite{revisited} work.
Since \SMI is bitwise inverse of \SMO, we will introduce them both in terms of operations used in \SMI.
After that we will describe and analyze \SMII.

Before we show how these algorithms work, we need one more definition.
\begin{definition}
  \label{def:degenerate}
  A \emph{degenerate symbol}~$w$ over an alphabet~$\ALF$ is a nonempty set of symbols from alphabet~$\ALF$.
  A \emph{degenerate string}~$S$ is a string built over an alphabet of degenerate symbols.
We say that a degenerate string~$\widetilde P$ matches a text~$T$ at a position~$j$ if $T_{j+i-1} \in \widetilde{P_i}$ for every $1 \le i \le p$.
\end{definition}

\subsection{\SMI}
The \SMI \cite{revisited} algorithm is a modification of the \SHIFTAND algorithm from Section~\ref{sec:shift-and} for \SM.
The algorithm uses the \GTM introduced in Section~\ref{sec:model}.

First let $\widetilde P = \{P_{1}, P_{2} \} \dots \{P_{x-1}, P_{x}, P_{x+1} \} \dots \{ P_{p-1}, P_{p} \}$ be a a degenerate version of pattern~$P$.
The symbol on position~$i$ in~$\widetilde P$ represents the set of symbols of~$P$ which can swap to that position.
To accommodate the \SHIFTAND algorithm to match degenerate patterns we need to change the way the~$\D[x,]$ masks are defined.
For each $x \in \ALF$ let~$\widetilde D^x_i$ be the bit array of size~$p$ such that for $1 \le i \le p, \widetilde D^x = 1$ if and only if $x \in \widetilde P_i$.

While a match of the degenerate pattern~$\widetilde P$ is a necessary condition for a swap match of~$P$, it is clearly not sufficient.
The way the \SMALGO algorithms try to fix this is by introducing \emph{\PMASK} $P(x_1,x_2,x_3)$ which is defined as
$P(x_1,x_2,x_3)_i = 1$ if  $i = 1$ or if there exist vertices $u_1,u_2$, and~$u_3$ and edges $(u_1,u_2), (u_2,u_3)$ in~$\G[P,P]$ for which $u_2 = \M[r,i]$ for some $r \in \{-1,0,1\}$ and $\LAB(u_n) = x_n$ for $1 \le n \le 3$, and $P(x_1,x_2,x_3)_i = 0$ otherwise.
One~$P$-mask called $P(x,x,x)$ is used to represent the~$P$-masks for triples $(x_1,x_2,x_3)$ which only contain 1 in the first column.

Now, whenever checking whether~$P$ prefix swap matches~$T$~$k+1$ symbols at position~$j$ we check for a match of~$\widetilde P$ in~$T$ and we also check whether $P(T_{j+k-1},T_{j+k},T_{j+k+1})_{k+1} = 1$.
This ensures that the symbols are able to swap to respective positions and that those three symbols of the text~$T$ are present in some~$\pi(P)$.

%

With the \PMASKS completed we initialize $\R[1,,]=1 \AND \widetilde D^{T_1}$. Then for every~$j=1$ to~$t$ we repeat the following. We compute~$\R[j+1,,]$ as $\R[j+1,,] = \mbox{LSO}(\R[j,,]) \AND \widetilde D^{T_{j+1}} \AND \mbox{RShift}(\widetilde D^{T_{j+2}}) \AND P(T_j, T_{j+1}, T_{j+2})$.
To check whether or not a swap match occurred we check whether $\R[j,p-1,] = 1$.
This is claimed to be sufficient because during the processing we are in fact considering not only the next symbol~$T_{j+1}$ but also the symbol~$T_{j+2}$.

\subsection{The Flaw in the \SMO, \SMI and \SMII}
We shall see that for a pattern $P = abab$ and a text $T = aaba$ all \SMO versions give false positives.

The concept of \SMO is based on the assumption that we can find a path in~$\G[P,P]$ by searching for consecutive paths of length~$3$ (\emph{triplets}), where each two consecutive share two columns and can partially overlap.
However, this only works if the consecutive triplets \emph{actually} share the two vertices in the common columns.
If the assumption is not true then
the found substring of the text might not match any swapped version of~$P$.

The above input gives such a configuration (see Fig.~\ref{fig:triplets}) and therefore the assumption is false.
The \SMI algorithm actually reports match of pattern $P = abab$ on a position~$1$ of text $T = aaba$.
This is obviously a false positive, as the pattern has two~$b$ symbols while the text has only one.

The reason behind the false positive match is as follows.
The algorithm checks whether the first triplet of symbols $(a,a,b)$ matches.
It can match the swap pattern~$aabb$.
Next it checks the second triplet of symbols $(a,b,a)$, which can match~$baba$.
We know that~$baba$ is not possible since it did not appear in the previous check, but the algorithm cannot distinguish them since it only checks for triplets existence.
Since each step gave us a positive match the algorithm reports a swap match of the pattern in the text.

In the Fig.~\ref{fig:triplets} we see the two triplets which \SMO assumes have two vertices in common.
The \SMII algorithm saves space by maintaining less information, however it simulates how \SMI works and so it contains the same flaw.

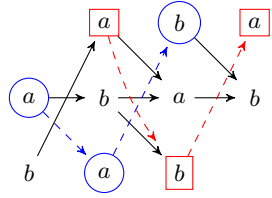
\begin{figure}[t]
  \centering
  \begin{tikzpicture}
    \node[blue]      (11)                 {$a$};
    \node[main]      (21) [right of = 11] {$b$};
    \node[main]      (31) [right of = 21] {$a$};
    \node[main]      (41) [right of = 31] {$b$};

    \node[red]       (20) [above of = 21] {$a$};
    \node[blue]      (30) [above of = 31] {$b$};
    \node[red]       (40) [above of = 41] {$a$};

    \node[main]      (12) [below of = 11] {$b$};
    \node[blue]      (22) [below of = 21] {$a$};
    \node[red]       (32) [below of = 31] {$b$};

    \path[every node/.style={font=\sffamily\small}]
    (11) edge (21)
    (21) edge (31)
    (31) edge (41)

    (11) edge [blue,dashed] (22)
    (21) edge (32)

    (12) edge (20)
    (22) edge [blue,dashed] (30)
    (32) edge [red,dashed] (40)

    (20) edge (31)
    (30) edge (41)

    (20) edge [shorten >= \ShortArrow, red,  dashed, bend right = 10] (32)
    ;
  \end{tikzpicture}
  \caption{\SMO flaw represented in the \PGRAPH for $P = abab$}
  \label{fig:triplets}
\end{figure}

\subsection{The Run of \SMI Resulting in the False Positive}
In Tables~\ref{tab:flawmasks} and~\ref{tab:flawrun} we can see the step by step execution of \SMI algorithm on pattern $P = abab$ and text $T = aaba$.
In Table~\ref{tab:flawrun} we see that~$\R[3,,]$ has~$1$ in the~$3\RD$ row which means that the algorithm reports a pattern match on a position~$1$.
This is a false positive, because it is not possible to swap match the pattern with two~$b$ symbols in the text with only one~$b$ symbol.
\begin{table}[t]
    \centering
    \caption{$\widetilde D$-masks and \PMASKS for $P = abab$. A column $xyz$ contains values $P(x,y,z)_i$.}
    \label{tab:flawmasks}
    \tabcolsep=0.11cm
    \begin{tabular}{|c|c|c|c|c|c|c|c|c|c|c|c|}
      \hline%
      \raisebox{0pt}[1.2em]{}i &  $\widetilde P_i$ &$\widetilde D^a_i$&$\widetilde D^b_i$&
      $aaa$
      &$aab$&$aba$&$baa$
      &$abb$&$bab$&$bba$
      &$bbb$\\
      \hline
      1 & $[ab]$  & 1 & 1 & 1 & 1 & 1 & 1 & 1 & 1 & 1 & 1 \\
      2 & $[ba]$  & 1 & 1 & 0 & 1 & 1 & 1 & 1 & 1 & 0 & 0 \\
      3 & $[ab]$  & 1 & 1 & 0 & 1 & 1 & 0 & 1 & 1 & 1 & 0 \\
      4 & $[ba]$  & 1 & 1 & 0 & 0 & 0 & 0 & 0 & 0 & 0 & 0 \\
      \hline
    \end{tabular}
\end{table}
\begin{table}[t]
  \centering
  \caption{\SMI algorithm execution for $P = abab$ and $T = aaba$. The column $RD^x$ denotes the values of $\mbox{RShift}(\widetilde D^x)$.}
   \label{tab:flawrun}
  \tabcolsep=0.11cm
  \begin{tabular}{|c|c|c|c|c|c|c|c|c|c|c|c|}
    \hline%
    \raisebox{0pt}[1.2em][0pt]{}i &$\R[1,,]$&$\mbox{LSO}(\R[1,,])$&$\widetilde D^{a}$&$RD^{b}$&$P(a,a,b)$
    &$\R[2,,]$&$\mbox{LSO}(\R[2,,])$&$\widetilde D^{b}$&$RD^{a}$&$P(a,b,a)$
    &$\R[3,,]$\\
    \hline
    1 & 1 & 1 & 1 & 1 & 1  & 1 & 1 & 1 & 1 & 1  & 1 \\
    2 & 0 & 1 & 1 & 1 & 1  & 1 & 1 & 1 & 1 & 1  & 1 \\
    3 & 0 & 0 & 1 & 1 & 1  & 0 & 1 & 1 & 1 & 1  & 1 \\
    4 & 0 & 0 & 1 & 0 & 0  & 0 & 0 & 1 & 0 & 0  & 0 \\
    \hline
  \end{tabular}
\end{table}

\subsection{Description of \SMII}
{
\newcommand{\pmask}{\text{pmask}}
\newcommand{\up}{\text{up}}
\newcommand{\down}{\text{down}}
\newcommand{\midl}{\text{middle}}
\newcommand{\checkup}{\text{checkUp}}
\newcommand{\checkdown}{\text{checkDown}}
\newcommand{\prevcheckup}{\text{prevCheckUp}}
\newcommand{\prevcheckdown}{\text{prevCheckDown}}
\newcommand{\temp}{\text{temp}}
\renewcommand{\neg}{\ \sim\mkern-3mu}
\newcommand{\patternlength}{\text{patternLength}}

To explain the \SMII algorithm in more detail, we first introduce a notion of \emph{change}. An \emph{upward change} corresponds to (the \BMA) going to vertex $\M[-1,i]$ for some~$i$, a \emph{downward change} corresponds to going to vertex $\M[+1,i]$, and a \emph{middle-ward change} corresponds to going to vertex $\M[0,i]$.

If a \emph{downward change} has occurred, then we have to check whether an \emph{upward change} occurs at the next position. If an \emph{upward change} has occurred, then we have to check whether a \emph{downward} or \emph{middle-ward change} occurs at the next position. The main problem here is how to tell whether the changes \emph{actually} occur.

To this end, the authors of the algorithm introduce three new types of masks, namely \emph{up-masks} $\up_{(x,y)}$, \emph{down-masks} $\down_{(x,y)}$, and \emph{middle-masks} $\midl_{(x,y)}$, which express whether an \emph{upward}, a \emph{downward}, and a \emph{middle-ward change} can occur at the particular position, respectively, with the endpoints of the edge having labels $x$ and $y$.

The authors of the algorithm now claim that to perform the above checks, it is enough to save the previous \emph{down-mask} and match its value with current \emph{up-mask} and $R_j$, or to save the previous \emph{up-mask} and match its value with current \emph{down-mask}, \emph{middle-mask}, and $R_j$, respectively. However, this way in both cases we only check whether the change can occur, not whether it actually occurred. This would lead not only to false positives (as shown in Section~\ref{chap:smalgo}), but also to false negatives.

Unfortunately, no more details are available about the algorithm in the original paper.
The pseudocode of \SMII (which contains numerous errors) performs something different and we include its analysis in the next section 
for completeness.
Nevertheless, the example presented in the Section~\ref{chap:smalgo} and in the previous section still makes the pseudocode (with the small errors corrected) report a false positive.

\subsection{Analysis of the Pseudocode of \SMII}

\label{app:analysis}
In this section we analyze the pseudocode of the \SMII algorithm as given by Ahmed et al. in~\cite{revisited}, we will perform equivalent transformation on the pseudocode in order to understand the meaning of the checks the pseudocode actually performs.

The original pseudocode is as follows.
\begin{algorithm}[H]
\caption{\SMII}
\label{Algorithm for Approximate String Matching Allowing for Fixed Length Translocation(Improved)}
\begin{algorithmic}[1]
\Statex \textbf{Require:} \textbf{Text T}, \textbf{up-mask up}, \textbf{down-mask down}, \textbf{middle-mask middle}, \textbf{P-mask pmask},   \textbf{D-mask D}  for given pattern p
\State{$R_{0} \leftarrow 2^{\patternlength-1}$}
\State{$\checkup \leftarrow \checkdown \leftarrow 0$}
\State{$R_{0} \leftarrow R_{0}$ \& $D_{T_{0}}$}
\State{$R_{1} \leftarrow R_{0} \gg 1$}
\For {$j=0$ to $(n-2)$}
	\State{$R_{j} \leftarrow R_{j}$ \& $pmask_{(T_{j},T_{j+1})}$ \& $D_{T_{j+1}}$}
	\State{$\temp \leftarrow \prevcheckup \gg 1$}
	\State{$\checkup \leftarrow \checkup$ $|$  $\up_{(T_{j},T_{j+1})}$}
	\State{$\checkup \leftarrow \checkup$ \& $\neg \down_{(T_{j},T_{j+1})}$ \& $\neg \midl_{(T_{j},T_{j+1})}$}
	\State{$\prevcheckup \leftarrow \checkup$}
	\State{$R_{j} \leftarrow \neg (\temp$ \& $\checkup)$ \& $R_{j}$}
	\State{$\temp \leftarrow \prevcheckdown \gg 1$}
	\State{$\checkdown \leftarrow \checkdown$ $|$  $\down_{(T_{j},T_{j+1})}$}
	\State{$\checkdown \leftarrow \checkdown$ \& $\neg \up_{(T_{j},T_{j+1})}$}
	\State{$\prevcheckdown \leftarrow \checkdown$}
	\State{$R_{j} \leftarrow \neg (\temp$ \& $\checkdown)$ \& $R_{j}$}
	\If{$(R_{j}$ \& $1) = 1$} \State{Match found ending at position $(j-1)$}
	\EndIf
	\State{$R_{j+1} \leftarrow R_{j} \gg 1$}
	\State{$\checkup \leftarrow \checkup \gg 1$}
	\State{$\checkdown \leftarrow \checkdown \gg 1$}
\EndFor
\end{algorithmic}
\end{algorithm}
The pseudocode has several problems. First, in the first iteration of the cycle, the algorithm uses the value of the variable $\prevcheckup$ which was never initialized. Second, the algorithm never adds new ones to the variable $R$ and, hence, can never report a match after position $\patternlength$ of the text. Third, if the text is of the same length as the pattern, the algorithm only applies the shift $\patternlength-2$ times to the original value of
$2^{\patternlength-1}$ (note that in the first iteration it uses $R_0$ and overwrites the value of $R_1$) before the last match check. Therefore, at the last check, the value could only drop to $2^{\patternlength-1-\patternlength+2}=2^1=2$ and the match check cannot be successful. Also the reported position of the match does not make much sense.

Let us first correct all these easy problems.
\begin{algorithm}[H]
\caption{\SMII}
\begin{algorithmic}[1]
\State{$R_{0} \leftarrow 2^{\patternlength-1}$}
\State{$\prevcheckup \leftarrow \prevcheckdown \leftarrow \checkup \leftarrow \checkdown \leftarrow 0$}
\State{$R_{0} \leftarrow R_{0}$ \& $D_{T_{0}}$}
\State{$R_{1} \leftarrow R_{0} \gg 1$}
\For {$j=0$ to $(n-2)$}
	\State{$R_{j+1} \leftarrow R_{j+1}$ \& $pmask_{(T_{j},T_{j+1})}$ \& $D_{T_{j+1}}$}
	\State{$\temp \leftarrow \prevcheckup \gg 1$}
	\State{$\checkup \leftarrow \checkup$ $|$  $\up_{(T_{j},T_{j+1})}$}
	\State{$\checkup \leftarrow \checkup$ \& $\neg \down_{(T_{j},T_{j+1})}$ \& $\neg \midl_{(T_{j},T_{j+1})}$}
	\State{$\prevcheckup \leftarrow \checkup$}
	\State{$R_{j+1} \leftarrow \neg (\temp$ \& $\checkup)$ \& $R_{j+1}$}
	\State{$\temp \leftarrow \prevcheckdown \gg 1$}
	\State{$\checkdown \leftarrow \checkdown$ $|$  $\down_{(T_{j},T_{j+1})}$}
	\State{$\checkdown \leftarrow \checkdown$ \& $\neg \up_{(T_{j},T_{j+1})}$}
	\State{$\prevcheckdown \leftarrow \checkdown$}
	\State{$R_{j+1} \leftarrow \neg (\temp$ \& $\checkdown)$ \& $R_{j+1}$}
	\If {$(R_{j+1}$ \& $1) = 1$} \State{Match found ending at position $(j+1)$}
	\EndIf
	\State{$R_{j+2} \leftarrow (R_{j+1} \gg 1) \mid 2^{\patternlength-1}$}
	\State{$\checkup \leftarrow \checkup \gg 1$}
	\State{$\checkdown \leftarrow \checkdown \gg 1$}
\EndFor
\end{algorithmic}
\end{algorithm}

If we now move the line setting $\prevcheckup$ to $\checkup$ after the line where the check with the $\temp$ variable is performed and similarly with $\prevcheckdown$, we do not need the $\temp$ variable anymore. We also move the shifts of $\checkup$ and $\checkdown$ closer to where this variables are used. We only show the important part of the algorithm.

\begin{algorithm}[H]
\caption{\SMII}
\begin{algorithmic}[1]
\setcounter{ALG@line}{4}
\Statex \dots
\For {$j=0$ to $(n-2)$}
	\State{$R_{j+1} \leftarrow R_{j+1}$ \& $pmask_{(T_{j},T_{j+1})}$ \& $D_{T_{j+1}}$}
	\State{$\checkup \leftarrow \checkup$ $|$  $\up_{(T_{j},T_{j+1})}$}
	\State{$\checkup \leftarrow \checkup$ \& $\neg \down_{(T_{j},T_{j+1})}$ \& $\neg \midl_{(T_{j},T_{j+1})}$}
	\State{$R_{j+1} \leftarrow \neg (\prevcheckup \gg 1$ \& $\checkup)$ \& $R_{j+1}$}
	\State{$\prevcheckup \leftarrow \checkup$}
	\State{$\checkup \leftarrow \checkup \gg 1$}
	\State{$\checkdown \leftarrow \checkdown$ $|$  $\down_{(T_{j},T_{j+1})}$}
	\State{$\checkdown \leftarrow \checkdown$ \& $\neg \up_{(T_{j},T_{j+1})}$}
	\State{$R_{j+1} \leftarrow \neg (\prevcheckdown \gg 1$ \& $\checkdown)$ \& $R_{j+1}$}
	\State{$\prevcheckdown \leftarrow \checkdown$}
	\State{$\checkdown \leftarrow \checkdown \gg 1$}
	\If {$(R_{j+1}$ \& $1) = 1$} \State{Match found ending at position $(j+1)$}
	\EndIf
	\State{$R_{j+2} \leftarrow (R_{j+1} \gg 1) \mid 2^{\patternlength-1}$}
\EndFor
\end{algorithmic}
\end{algorithm}

Now we swap the order of setting $\prevcheckup$ to $\checkup$ and the shift of $\checkup$. As this makes $\prevcheckup$ shifted by one, we remove the additional shift in the check. Similarly for $\checkdown$.

\begin{algorithm}[H]
\caption{\SMII}
\begin{algorithmic}[1]
\setcounter{ALG@line}{6}
\Statex \dots
	\State{$\checkup \leftarrow \checkup$ $|$  $\up_{(T_{j},T_{j+1})}$}
	\State{$\checkup \leftarrow \checkup$ \& $\neg \down_{(T_{j},T_{j+1})}$ \& $\neg \midl_{(T_{j},T_{j+1})}$}
	\State{$R_{j+1} \leftarrow \neg (\prevcheckup$ \& $\checkup)$ \& $R_{j+1}$}
	\State{$\checkup \leftarrow \checkup \gg 1$}
	\State{$\prevcheckup \leftarrow \checkup$}
	\State{$\checkdown \leftarrow \checkdown$ $|$  $\down_{(T_{j},T_{j+1})}$}
	\State{$\checkdown \leftarrow \checkdown$ \& $\neg \up_{(T_{j},T_{j+1})}$}
	\State{$R_{j+1} \leftarrow \neg (\prevcheckdown$ \& $\checkdown)$ \& $R_{j+1}$}
	\State{$\checkdown \leftarrow \checkdown \gg 1$}
	\State{$\prevcheckdown \leftarrow \checkdown$}
\Statex \dots
\end{algorithmic}
\end{algorithm}

Now we institute $\checkup$ into the check and move its computation after the check.

\begin{algorithm}[H]
\caption{\SMII}
\begin{algorithmic}[1]
\setcounter{ALG@line}{5}
\Statex \dots
	\State{$R_{j+1} \leftarrow R_{j+1}$ \& $pmask_{(T_{j},T_{j+1})}$ \& $D_{T_{j+1}}$}
	\State{$R_{j+1} \leftarrow \neg (\prevcheckup$ \& ($\checkup \mid \up_{(T_{j},T_{j+1})})$  \& $\neg \down_{(T_{j},T_{j+1})}$ \& $\neg \midl_{(T_{j},T_{j+1})})$ \& $R_{j+1}$}
	\State{$\checkup \leftarrow (\checkup \mid \up_{(T_{j},T_{j+1})})$ \& $\neg \down_{(T_{j},T_{j+1})}$ \& $\neg \midl_{(T_{j},T_{j+1})}$}
	\State{$\checkup \leftarrow \checkup \gg 1$}
	\State{$\prevcheckup \leftarrow \checkup$}
	\State{$R_{j+1} \leftarrow \neg (\prevcheckdown$ \& $(\checkdown \mid \down_{(T_{j},T_{j+1})})$ \& $\neg \up_{(T_{j},T_{j+1})})$ \& $R_{j+1}$}
	\State{$\checkdown \leftarrow (\checkdown \mid \down_{(T_{j},T_{j+1})})$ \& $\neg \up_{(T_{j},T_{j+1})}$}
	\State{$\checkdown \leftarrow \checkdown \gg 1$}
	\State{$\prevcheckdown \leftarrow \checkdown$}
\Statex \dots
\end{algorithmic}
\end{algorithm}

Now note that during the check, the content of $\prevcheckup$ is exactly the same as the content of $\checkup$, so we can remove $\prevcheckup$ completely.

\begin{algorithm}[H]
\caption{\SMII}
\begin{algorithmic}[1]
\State{$R_{0} \leftarrow 2^{\patternlength-1}$}
\State{$\checkup \leftarrow \checkdown \leftarrow 0$}
\State{$R_{0} \leftarrow R_{0}$ \& $D_{T_{0}}$}
\State{$R_{1} \leftarrow R_{0} \gg 1$}
\For {$j=0$ to $(n-2)$}
	\State{$R_{j+1} \leftarrow R_{j+1}$ \& $pmask_{(T_{j},T_{j+1})}$ \& $D_{T_{j+1}}$}
	\State{$R_{j+1} \leftarrow \neg (\checkup$ \& ($\checkup \mid \up_{(T_{j},T_{j+1})})$  \& $\neg \down_{(T_{j},T_{j+1})}$ \& $\neg \midl_{(T_{j},T_{j+1})})$ \& $R_{j+1}$}
	\State{$\checkup \leftarrow (\checkup \mid \up_{(T_{j},T_{j+1})})$  \& $\neg \down_{(T_{j},T_{j+1})}$ \& $\neg \midl_{(T_{j},T_{j+1})}$}
	\State{$\checkup \leftarrow \checkup \gg 1$}
	\State{$R_{j+1} \leftarrow \neg (\checkdown$ \& $(\checkdown \mid \down_{(T_{j},T_{j+1})})$ \& $\neg \up_{(T_{j},T_{j+1})})$ \& $R_{j+1}$}
	\State{$\checkdown \leftarrow (\checkdown \mid \down_{(T_{j},T_{j+1})})$ \& $\neg \up_{(T_{j},T_{j+1})}$}
	\State{$\checkdown \leftarrow \checkdown \gg 1$}
	\If {$(R_{j+1}$ \& $1) = 1$} \State{Match found ending at position $(j+1)$}
	\EndIf
	\State{$R_{j+2} \leftarrow (R_{j+1} \gg 1) \mid 2^{\patternlength-1}$}
	\EndFor
\end{algorithmic}
\end{algorithm}

Now we modify the expressions by laws of logic to arrive at the following formulation.

\begin{algorithm}[H]
\caption{\SMII}
\begin{algorithmic}[1]
\setcounter{ALG@line}{6}
\Statex \dots
	\State{$R_{j+1} \leftarrow R_{j+1}$ \& $(\neg \checkup \mid  \down_{(T_{j},T_{j+1})} \mid \midl_{(T_{j},T_{j+1})})$}
	\State{$\checkup \leftarrow (\checkup$ \& $\neg \down_{(T_{j},T_{j+1})}$ \& $\neg \midl_{(T_{j},T_{j+1})}) \mid (\up_{(T_{j},T_{j+1})}$ \& $\neg \down_{(T_{j},T_{j+1})}$ \& $\neg \midl_{(T_{j},T_{j+1})})$}
	\State{$\checkup \leftarrow \checkup \gg 1$}
	\State{$R_{j+1} \leftarrow  R_{j+1}$ \& $(\neg \checkdown \mid \up_{(T_{j},T_{j+1})})$}
	\State{$\checkdown \leftarrow (\checkdown$ \& $\neg \up_{(T_{j},T_{j+1})})\mid (\down_{(T_{j},T_{j+1})}$ \& $\neg \up_{(T_{j},T_{j+1})})$}
	\State{$\checkdown \leftarrow \checkdown \gg 1$}
\Statex \dots
\end{algorithmic}
\end{algorithm}

Now, if the first subexpression in the logical OR setting the new value of $\checkup$ is true, then the appropriate bit of $R_{j+1}$ was just set to $0$ on the previous line and filtrating this bit again in future is useless. Hence, we can omit this part of the expression. We arrive at the following resulting pseudocode.

\begin{algorithm}[H]
\caption{\SMII}
\begin{algorithmic}[1]
\State{$R_{0} \leftarrow 2^{\patternlength-1}$}
\State{$\checkup \leftarrow \checkdown \leftarrow 0$}
\State{$R_{0} \leftarrow R_{0}$ \& $D_{T_{0}}$}
\State{$R_{1} \leftarrow R_{0} \gg 1$}
\For {$j=0$ to $(n-2)$}
	\State{$R_{j+1} \leftarrow R_{j+1}$ \& $pmask_{(T_{j},T_{j+1})}$ \& $D_{T_{j+1}}$}
	\State{$R_{j+1} \leftarrow R_{j+1}$ \& $(\neg \checkup \mid  \down_{(T_{j},T_{j+1})} \mid \midl_{(T_{j},T_{j+1})})$}
	\State{$\checkup \leftarrow \up_{(T_{j},T_{j+1})}$ \& $\neg \down_{(T_{j},T_{j+1})}$ \& $\neg \midl_{(T_{j},T_{j+1})}$}
	\State{$\checkup \leftarrow \checkup \gg 1$}
	\State{$R_{j+1} \leftarrow  R_{j+1}$ \& $(\neg \checkdown \mid \up_{(T_{j},T_{j+1})})$}
	\State{$\checkdown \leftarrow \down_{(T_{j},T_{j+1})}$ \& $\neg \up_{(T_{j},T_{j+1})}$}
	\State{$\checkdown \leftarrow \checkdown \gg 1$}
	\If {$(R_{j+1}$ \& $1) = 1$} \State{Match found ending at position $(j+1)$}
	\EndIf
	\State{$R_{j+2} \leftarrow (R_{j+1} \gg 1) \mid 2^{\patternlength-1}$}
	\EndFor
\end{algorithmic}
\end{algorithm}

Now it is easy to see, that $\checkup$ stores the information on whether an \textit{upward-change} must have occurred in the previous step (provided that there was a prefix match) and this is compared with the information whether \textit{downward-change} or \textit{middle-change} can occur. Similarly for the \textit{downward-change}. This is not sufficient to avoid false positives since sometimes both \textit{upward-change} and \textit{downward-change} can occur (e.g, as in our counterexample), in which case no filtration is performed at all.
}

\subsection{Why the Flaw is Not Easily Repairable}

Consider the following attempt to fix the \SMI or \SMII.
After each reported match we check for the validity of the result using a single linear-time algorithm.
This approach would rule out false positives but it ruins the time complexity of the algorithms, since there are texts of arbitrary length~$t$ with $\Theta(t)$ of reported occurrences.

Namely consider the text $T=aa(baa)^n$ for some positive~$n$, pattern $P = abab$, and let $t=|T|$.
Note that $n = (t-2)/3 = \Theta(t)$.
Text~$T$ contains string~$aaba$ on positions $1,4,7,\dots,3(n-1)+1$ ($n$ occurrences in total) and string~$baab$ on positions $3,6,\dots,3(n-1)$ ($n-1$ occurrences in total).
Thus there are~$2n-1$ occurrences which need to be checked since the~$n$ occurrences of~$aaba$ are reported by the algorithms although they are not valid matches.
Even if the checking for correctness was done in linear time~$O(p)$, the algorithms will report up to $\Theta(t)$ occurrences which means we have to run the checking algorithm $\Theta(t)$ times.
Therefore the time complexity of a version of the \SMALGO algorithm corrected this way is $\O(t p)$ even for a pattern length similar to the word-size of the target machine.

Also, the flaw cannot be resolved by checking for subpaths of length 4 or any larger constant, due to the following. Consider a pattern $P=(ab)^n$ and a text $T=aa(ba)^{n-1}$ for any positive~$n$. Obviously~$P$ does not swap match~$T$, as they are of the same length~$2n$, but~$T$ contains more~$a$'s than~$P$. However, there is a swap permutation~$\pi$ for~$P$ such that $(\pi(P))_{[1\ldots (2n-1)]} = T_{[1\ldots (2n-1)]}$ and also a swap permutation~$\pi'$ for~$P$ such that $(\pi'(P))_{[2\ldots (2n)]} = T_{[2\ldots (2n)]}$.
For example if we have $P = abab$ and a text $T = aabaabaabaa$ both \SMALGO algorithms report swap matches on positions $\{ 1,3,4,6,7 \}$ while the correct output would be $\{ 3,6 \}$.

\section{Experiments}
\label{chap:experiments}
{
\def\SMAL{SMALGO\xspace}
\def\GSMA{GSM\xspace}
\def\BPCS{BPCS\xspace}
\def\BPBCS{BPBCS\xspace}

We implemented our Algorithm~\ref{alg:GSM} (GSM), described in Section \ref{sec:gsm}, the Bitwise Parallel Cross Sampling (BPCS) algorithm by Cantone and Faro~\cite{CS}, the Bitwise Parallel Backward Cross Sampling (BPBCS) algorithm by Campanelli et al.~\cite{BCS},
and the faulty \SMAL algorithm by Iliopoulos and Rahman~\cite{newModel}.
All these implementations are available online.\footnote{\url{http://users.fit.cvut.cz/blazeva1/gsm.html}}

We tested the implementations on three real-world datasets.
The first dataset (CH) is the 7\textsuperscript{th} chromosome of the human genome\footnote{\url{ftp://ftp.ensembl.org/pub/release-90/fasta/homo_sapiens/dna/}} which consists of 159\,M characters from the standard \texttt{ACTG} nucleobases and \texttt{N} as for non-determined.
Second dataset (HS) is a partial genome of Homo sapiens from the Protein Corpus\footnote{\url{http://www.data-compression.info/Corpora/ProteinCorpus/}} with 3.3\,M characters representing proteins encoded in 19 different symbols.
The last dataset (BIB) is the Bible text of the Cantenbury Corpus\footnote{\url{http://corpus.canterbury.ac.nz/descriptions/large/bible.html}} with 4.0\,M characters containing 62 different symbols.
For each length from $3, 4, 5, 6, 8, 9, 10, 12, 16$, and $32$ we randomly selected 10,000 patterns from each text and processed each of them with each implemented algorithm.

All measured algorithms were implemented in {\tt C++} and compiled with {\tt -O3} in {\tt gcc 6.3.0}.
Measurements were carried on an Intel Core i7-4700HQ processor with 2.4\,GHz base frequency and 3.4\,GHz turbo with 8\,GiB of DDR3 memory at 1.6\,GHz.
Time was measured using {\tt std::chrono::high\_resolution\_clock::now()} from the {\tt C++ chrono} library.
The resulting running times, shown in Table~\ref{tab:dataExperiments}, were averaged over the 10,000 patterns of the given length. 
\renewcommand{\arraystretch}{1.1}
\begin{table}[t]
    \centering
    \caption{Comparison of the running times in milliseconds. Each value is the average over 10,000 patterns randomly selected from the text.}
    \setlength{\tabcolsep}{4pt}
    \begin{tabular}{|c|c|c|c|c|c|c|c|c|c|c|c|}
        \hline
        \multirow{2}{6mm}{Data ($|\Sigma|$)}&\multirow{2}{*}{Algor.}&\multicolumn{10}{|c|}{Pattern Length}\\
        \cline{3-12}
         && 3& 4& 5& 6& 8& 9& 10& 12& 16& 32 \\
        \hline\hline        
        \multirow{4}{6mm}{CH (5)}
        &\SMAL  & 426         & 376         & 355         & 350         & 347         & 347         & 344         & 347         & 345         & 345         \\
        &\BPCS  & 398         & \textbf{353}& \textbf{335}& \textbf{332}& \textbf{329}& 329         & 326         & 328         & 329         & 327         \\
        &\BPBCS & 824         & 675         & 555         & 472         & 366         & \textbf{328}& \textbf{297}& \textbf{257}& \textbf{199}& \textbf{112}\\
        &\GSMA  & \textbf{394}& 354         & 338         & 333         & 332         & 331         & 329         & 333         & 331         & 333         \\
        \hline
%
        \hline
        \multirow{4}{6mm}{HS (19)}
        &\SMAL & 4.80         & 4.73         & 4.72         & 4.74         & 4.70         & 4.71         & 4.71         & 4.71         & 4.72         & 4.70          \\
        &\BPCS & 4.43         & \textbf{4.36}& \textbf{4.36}& 4.36         & 4.34         & 4.33         & 4.34         & 4.34         & 4.35         & 4.34          \\
        &\BPBCS& 7.16         & 5.80         & 4.79         & \textbf{4.05}& \textbf{3.03}& \textbf{2.70}& \textbf{2.44}& \textbf{2.06}& \textbf{1.62}& \textbf{0.95} \\
        &\GSMA & \textbf{4.42}& 4.38         & 4.41         & 4.46         & 4.45         & 4.45         & 4.45         & 4.44         & 4.53         & 4.48          \\
        \hline
        \hline
        \multirow{4}{6mm}{BIB (62)}
        &\SMAL &  8.60        & 8.38         & 8.29         & 8.34         & 8.32         & 8.33         & 8.30         & 8.35         & 8.35         & 8.33 \\
        &\BPCS &  7.53        & \textbf{7.36}& \textbf{7.28}& 7.29         & 7.26         & 7.27         & 7.26         & 7.28         & 7.29         & 7.25 \\
        &\BPBCS& 12.43        & 10.03        & 8.26         & \textbf{7.03}& \textbf{5.44}& \textbf{4.93}& \textbf{4.52}& \textbf{3.93}& \textbf{3.19}& \textbf{1.88} \\
        &\GSMA & \textbf{7.52}&  7.37        & 7.31         & 7.35         & 7.38         & 7.40         & 7.38         & 7.42         & 7.44         & 7.40 \\
        \hline
    \end{tabular}
    \label{tab:dataExperiments}
\end{table}

The results show, that the \GSM algorithm runs approximately $23\%$ faster than \SMO (ignoring the fact
that \SMO is faulty by design). Also, the performance of \GSM and \BPCS is almost indistinguishable
and according to our experiments, it varies in the span of units of percents depending on the exact CPU,
cache, RAM and compiler setting.
The seemingly superior average performance of \BPBCS is caused by the heuristics \BPBCS uses;
however, while the worst-case performance of GSM is guaranteed, the performance of \BPBCS for certain patterns is
worse than that of \GSM.
Also note that \GSM is a streaming algorithm while the others are not.

\begin{table}[t]
    \centering
    \caption{Found occurrences across datasets: The value is simply the sum of occurrences over all the patterns.}
    \setlength{\tabcolsep}{4pt}
    \begin{tabular}{|c|c|c|c|c|}
        \hline
        \multirow{2}{*}{Algorithm}&\multicolumn{3}{|c|}{Dataset}\\
        \cline{2-4}
        & CH & HS & BIB \\
        \hline
        \SMAL & 86243500784 & 51136419 & 315612770 \\
        rest & 84411799892 & 51034766 & 315606151 \\
        \hline
    \end{tabular}
    \label{tab:flawExperiments}
\end{table}

Table \ref{tab:flawExperiments} visualizes the accurateness of \SMI with respect to its flaw by comparing the number of occurrences found by the respective algorithms.
The ratio of false positives to true positives for the \SMI was: CH $2.17\%$, HS $0.20\%$ and BIB $0.002\%$.

}

\bibliographystyle{splncs03}
\bibliography{main}

\end{document}